\newcommand{\blind}{0}
\theoremstyle{plain}
\newtheorem{theorem}{Theorem}
\newtheorem{lemma}[theorem]{Lemma}
\newtheorem{prop}[theorem]{Proposition}
\newtheorem{corollary}[theorem]{Corollary}
\newtheorem{remark}[theorem]{Remark}
\newtheorem{model}{DGP}
\begin{document}

\def\spacingset#1{\renewcommand{\baselinestretch}%
{#1}\small\normalsize} \spacingset{1}


\if0\blind
{
  \title{\bf Temporal Wasserstein Imputation: A Versatile Method for Time Series Imputation} 
  \author{Shuo-Chieh Huang 
    \vspace{-0.4cm}
    \\
    Department of Statistics, Rutgers University\vspace{0.2cm} \\
    Tengyuan Liang \\
    Booth School of Business, University of Chicago\vspace{0.2cm} \\
    and\vspace{0.2cm} \\
    Ruey S. Tsay \\
    Booth School of Business, University of Chicago}
  \maketitle
} \fi

\if1\blind
{
  \bigskip
  \bigskip
  \bigskip
  \begin{center}
    {\LARGE\bf Temporal Wasserstein Imputation: Versatile Missing Data Imputation for Time Series}
\end{center}
  \medskip
} \fi

\bigskip
\begin{abstract}
    Missing data can significantly hamper standard time series analysis, yet they occur frequently in applications.
    In this paper, we introduce temporal Wasserstein imputation, a novel method for imputing missing data in time series.
    Unlike most existing techniques, our approach is fully nonparametric, circumventing the need for model specification prior to imputation, making it suitable for empirical applications even with nonlinear dynamics.
    Its principled algorithmic implementation can seamlessly handle univariate or multivariate time series with any non-systematic missing pattern. In addition, the plausible range and side information of the missing entries (such as box constraints) can easily be incorporated. 
    Furthermore, our method mitigates the distributional bias 
    common among many existing approaches, ensuring more reliable downstream statistical analysis using the imputed series.
    We establish the convergence of an alternating minimization algorithm to critical points. 
    We also provide conditions under which the marginal distributions of the underlying time series can be identified.
    Numerical experiments, including extensive simulations covering both linear and nonlinear time series and an analysis on a real-world groundwater dataset, corroborate the practical usefulness of the proposed method.
\end{abstract}

\noindent%
{\it Keywords:}  Optimal transport, Missing values, Nonlinear time series, Threshold AR model, Compositional time series, Nonstationary time series
\vfill

\newpage
\spacingset{2} 

\section{Introduction}
Data collection is often subject to various disruptions at either the idiosyncratic or system level, leading to missing values. 
In time series analysis, missing data pose serious challenges to the application of many statistical tools because they typically assume that observations were taken at equally spaced time points.
Indeed, the problem of missing data in time series has received wide attention from numerous fields, including environmental and climate studies \citep{Afrifa-Yamoah2020, Junger2015}, geophysics \citep{shen2015, Schoellhamer2001}, and engineering \citep{Li2019, Racault2014}.

Existing methods to cope with missing data abound, which can be classified into two general categories.
The methods of the first category estimate the missing data before any further analysis.
For example, the \textit{optimal interpolator}, which is the expectation of the missing data conditional on the observed data and a given model, has been extensively studied \citep{Harvey1984, Pourahmadi1989, pena1991, gomez1999, ALONSO2008, pena2021, McElroy2020, McElroy2022, McElroy2022-2}.
Without any model assumption, some nonparametric imputation methods are also available.
For instance, the singular spectrum analysis (SSA, \citealp{Kondrashov2006, shen2015, golyandina2001}), which relies on the spectral analysis of the trajectory matrix, is a nonparametric method to impute missing time series data.
Other nonparametric methods include smoothing with polynomial interpolation or B-splines and moment matching (see \citealp{Carrizosa2013} and \citealp{hastie2009}).
The methods of the second category circumvent the imputation stage and directly learns the model parameters.
For example, by specifying a Gaussian likelihood function, one can use the Kalman filter to carry out estimation. See \citet{Harvey_1990} and \citet{little2019}, among others.


However, the aforementioned methods have some drawbacks.
On the one hand, the required model specification of the optimal interpolator and the likelihood-based approaches 
may be ill-informed when there are many missing values.
Furthermore, since computing the optimal interpolator under a nonlinear model can be challenging either computationally or analytically, simple linear models are often used in practice.
Finally, the imputed series generally has a different distribution compared to the original process, 
which biases downstream statistical analysis especially when the number of missing data is non-negligible.
On the other hand, the nonparametric SSA methods work well with structural time series featuring deterministic trends and seasonality, but are not effective in handling processes described by stochastic models, such as the celebrated ARIMA process. This is particularly so when the percentage of missing values is high.

In this paper, we present the temporal Wasserstein imputation (TWI), a novel approach for time series imputation which addresses the aforementioned limitations. 
As a nonparametric method, TWI requires no model specification prior to imputation, allowing the underlying time series to be nonlinear, such as the threshold autoregressive process (TAR, \citealp{tong1983}).
Hence, TWI is highly versatile and can be applied to a wide range of stochastic time series, which nicely complements the SSA-based methods.
Formulated as an optimization problem, TWI can be implemented efficiently through an alternating minimization algorithm. 
This principled algorithmic approach not only is computationally efficient, but can also be easily generalized to handle multivariate or non-stationary time series (see Section \ref{Sec::Method}).
Additionally, it can seamlessly integrate the side information by restricting the admissible imputations, which can be used to enforce natural constraints such as nonnegativity constraints.
The optimization and statistical underpinnings of TWI are rooted in the well-known optimal transport problem \citep{villani2008, villani2021, Cuturi2013}.
The core idea of TWI is that by minimizing the Wasserstein distance (over all admissible imputations) between the marginal distributions before and after a specified time point, the marginal distributions implied by the imputation shall resemble those of the original series, which facilitates downstream and exploratory analyses. 

Recently, optimal transport techniques have gained traction in imputation problems under different contexts.
In their pioneering work, \citet{Muzellec2020} 
first advocated that ``ideal'' imputations should preserve the distribution of the underlying data.
\citet{Zhao2023} called this idea ``distribution matching,'' and devised a deep invertible function approach to imputation, leveraging the statistical and computational properties of optimal transport (see also \citealp{Wu2023}).
\citet{hur2024} proposed a new method for individual treatment effect imputation based on computational optimal transport.
In the time series context, after the completion of this paper we became aware of a concurrent and independent work by \citet{wang2025}, who incorporated a new Fourier-based discrepancy measure into the framework of \citet{Muzellec2020}.
However, because of batch sampling and the non-convexity of the optimization problem, neither \citet{wang2025} nor \citet{Muzellec2020} has provided satisfactory optimization or statistical guarantees.
In this paper, we propose an alternating minimization algorithm and study its optimization and statistical properties of TWI.
Our result shows that the algorithm monotonically reduces the loss and is guaranteed to converge to a critical point.
Moreover, we show, in a two-state Markov chain example, that the correct underlying distributions can indeed be identified by TWI under certain assumptions. 
This simple example encapsulates the identification problem and provides an original insight into the distribution matching framework.

The rest of the paper is organized as follows.
Section \ref{Sec::Prelim} provides a brief review of the elements of optimal transport.
Section \ref{Sec::Method} maps out the optimization formulation and introduces the computational scheme of TWI in detail.
It also discusses some extensions and comparisons with certain existing methods.
In Section \ref{Sec::theory}, we investigate the optimization and statistical properties of TWI. 
We also propose a simple modification of TWI based on our theoretical results, termed $k$-TWI, which shows further improvement over TWI in our numerical experiments.
In Section \ref{Sec::Simulation}, we evaluate the performance of TWI on synthetic data generated from a wide range of time series models, including nonlinear, multivariate, and nonstationary settings.
Across all processes considered, TWI successfully recovers the underlying dynamics and yields favorable downstream statistics. 
Its advantage is particularly prominent in the presence of nonlinear dynamics. 
In Section \ref{Sec::realdata}, we apply TWI to a real-world groundwater dataset.
We use the imputed series to compute summary statistics and perform autoregressive modeling.
Compared to existing methods, TWI can effectively capture the underlying dynamics, especially when the missing ratio is high. 


\section{Preliminary: Optimal transport} \label{Sec::Prelim}
In this section, we briefly review the optimal transport (OT) problem, which serves as a prerequisite to the rest of the paper. 
For a more comprehensive account on the theory of optimal transport, the readers are referred to \citet{villani2021, villani2008}.

Given two probability measures $\mu$ and $\nu$ defined on $\mathbb{R}^{p}$ and a cost function $L: \mathbb{R}^{p} \times \mathbb{R}^{p} \rightarrow \mathbb{R}_{+}$, the optimal transport cost between $\mu$ and $\nu$ is defined as 
\begin{align*} 
    \mathsf{OTC}(\mu, \nu) = \inf_{\pi \in \mathcal{M}(\mu, \nu)} \int L(\mathbf{x}, \mathbf{y}) d\pi(\mathbf{x}, \mathbf{y}),
\end{align*}
where $\mathcal{M}$ is the set of couplings between $\mu$ and $\nu$, i.e.,~probability measures over $\mathbb{R}^{p} \times \mathbb{R}^{p}$ with marginals $\mu$ and $\nu$. 
The classic interpretation for the minimization problem is finding the most cost-effective way to move a pile of earth distributed as $\mu$ (with volume normalized to one) to fill a pit whose shape is distributed as $\nu$.
It also has a matching or assignment interpretation that has led to fruitful results in the economics literature \citep{galichon2016optimal}.

If $L(\mathbf{x}, \mathbf{y}) = \Vert \mathbf{x} - \mathbf{y} \Vert^{k}$, for some $k \geq 1$, the resulting optimal transport cost is known as the Wasserstein distance of order $k$ (to the $k$-th power), which we denote as 
\begin{align*}
    \mathcal{W}_{k}(\mu, \nu) := \left\{\inf_{\pi \in \mathcal{M}(\mu, \nu)} \int \Vert \mathbf{x} - \mathbf{y} \Vert^{k} d\pi(\mathbf{x}, \mathbf{y}) \right\}^{1/k} = \{ \mathsf{OTC}(\mu, \nu) \}^{1/k}.
\end{align*}
The Wasserstein distance is a metric on the space of probabilities with finite $k$-th moment on $\mathbb{R}^{p}$, and metrizes the weak topology on this space \citep[Chpt. 9]{villani2021}.
In particular, if $\mu$ and $\{\mu_{t}:t=1,2,\ldots\}$ are probability measures on $\mathbb{R}^{p}$ such that $\mathcal{W}_{k}(\mu_{t}, \mu) \rightarrow 0$ as $t \rightarrow \infty$, then $\mu_{t} \Rightarrow \mu$ and $\int \Vert \mathbf{x} \Vert^{k} \mu_{t}(d\mathbf{x}) \rightarrow \int \Vert \mathbf{x} \Vert^{k} \mu(d\mathbf{x})$, where $\Rightarrow$ denotes the weak convergence of probability measures \citep{billingsley1995}. 
Therefore, it can be used as a discrepancy metric between $\mu$ and $\nu$.
In fact, the Wasserstein distance offers several advantages over other discrepancy metrics such as the Kullbeck-Leibler divergence and the Kolmogorov-Smirnov distance. For instance, the Wasserstein distance has a nice dual representation and handles support mismatch easily, making it suitable for machine learning and statistical applications.

If $\mu$ and $\nu$ are discrete measures, for instance $\mu = \sum_{i=1}^{n} p_{i}\delta_{\mathbf{x}_{i}}$ and $\nu = \sum_{j=1}^{m}q_{j}\delta_{\mathbf{y}_{j}}$, where $\{\mathbf{x}_{i}\}_{i=1}^{n}, \{\mathbf{y}_{j}\}_{j=1}^{m}$ are in $\mathbb{R}^{p}$, $\sum_{i=1}^{n} p_{i} = \sum_{j=1}^{m} q_{j} = 1$, $p_{i} > 0$, $q_{j} > 0$ for all $i,j$, and $\delta_{\mathbf{x}}$ denotes the dirac measure at $\mathbf{x}$, then the minimization problem becomes the following linear program.
\begin{align}
    \mathsf{OTC}(\mu, \nu) = \min_{\{\pi_{ij}\}, \pi_{ij} \geq 0}& \sum_{i,j} L(\mathbf{x}_{i}, \mathbf{y}_{j}) \pi_{ij} \notag \\
    \mbox{s.t. } & \sum_{i=1}^{n} \pi_{ij} = q_{j}, \quad j = 1,\ldots,m, \label{Sec2-1}
    & \sum_{j=1}^{m} \pi_{ij} = p_{i}, \quad i = 1,\ldots,n. 
\end{align}
The constraints in \eqref{Sec2-1} 
enforce the joint distribution $\{\pi_{ij}\}$ to be a coupling between $\mu$ and $\nu$, and will henceforth be called the coupling constraints. 
Recent research has made significant progress in the fast computation of the solution $\{\pi_{ij}\}$ through entropic regularization and some algorithms are amenable to large-scale or GPU parallelization; see \citet{Cuturi2013} and \citet{peyre2019}.

\section{Temporal Wasserstein imputation} \label{Sec::Method} 
This section introduces the temporal Wasserstein imputation (TWI) method for time series imputation. 
We first present its optimization formulation, which shows close connection to optimal transport.
Then, we describe an alternating minimization algorithm, and highlight its relation to some existing approaches.  

\subsection{Optimization formulation}
Let $\{\tilde{x}_{t}\}$ be the time series of interest and we observe the data $\{x_{t}:t = 0, 1, \ldots, n-1\}$.
Denote by $\bigstar$ a generic missing value and $\mathcal{M}$ the set of time indices corresponding to the missing observations.
Thus, $x_{t} = \bigstar$ if $t \in \mathcal{M}$ and $x_{t} = \tilde{x}_{t}$ otherwise. 
We discuss below the case that $x_t$ is stationary. The case of unit-root nonstationary series is discussed at the end of the next subsection.
If $\{\tilde{x}_{t}\}$ is stationary, then its $p$-dimensional marginal distribution 
$\mu_{p}$, i.e.~the distributions of $(\tilde{x}_{t}, \tilde{x}_{t-1}, \ldots, \tilde{x}_{t-p+1})$, does not depend on $t$, where $p \in \mathbb{N}$. 
In view of this property, when imputing the missing values in $(x_{0}, \ldots, x_{n-1})$, we seek an imputation $\mathbf{w} = (w_{0}, w_{1}, \ldots, w_{n-1})^{\top}$ whose marginal distribution remains nearly unchanged.
More specifically, we first split the time series into two parts $\{w_{0}, w_{1}, \ldots, w_{n_{1}}\}$ and $\{w_{n_{1}+1}, w_{n_{1}+2}, \ldots, w_{n-1}\}$   
and define the two empirical marginal distributions
\begin{align*}
    \hat{\mu}_{\mathrm{pre}}(\mathbf{w}) =  \sum_{t=p-1}^{n_{1}} \frac{\delta_{(w_{t}, w_{t-1}, \ldots, w_{t-p+1})}}{n_{1}-p+2}, \quad 
    \hat{\mu}_{\mathrm{post}}(\mathbf{w}) = \sum_{t=n_{1}+1}^{n-1} \frac{\delta_{(w_{t}, w_{t-1}, \ldots, w_{t-p+1})}}{n-n_{1}-1},
\end{align*}
where $p - 1 < n_1 < n - p$, and the values of $p$ and $n_1$ used will be discussed later. The discrepancy between the two empirical distributions can then be measured by the optimal transport cost, which is the minimum value of the following linear program.
\begin{equation}
\left.
\begin{aligned} \label{sec2_OTC}
    \mathsf{OTC} (\hat{\mu}_{\mathrm{pre}}(\mathbf{w}), \hat{\mu}_{\mathrm{post}}(\mathbf{w})) = \min_{\mathbf{\Pi} = (\pi_{ij})_{ij}, \pi_{ij} \geq 0} & \sum_{i = p-1}^{n_{1}} \sum_{j=n_{1}+1}^{n-1} \pi_{ij} L(\mathbf{v}_{i}(\mathbf{w}), \mathbf{v}_{j}(\mathbf{w})) \\
    \mbox{s.t.  } & \sum_{i=p-1}^{n_{1}} \pi_{ij} = \frac{1}{n - n_{1} - 1} \quad \mbox{for all } j\\
    & \sum_{j=n_{1}+1}^{n-1} \pi_{ij} = \frac{1}{n_{1} - p + 2} \quad \mbox{for all } i, 
\end{aligned}
\right\}
\end{equation}
where $\mathbf{v}_{i}(\mathbf{w}) := (w_{i}, w_{i-1}, \ldots, w_{i-p+1})^{\top}$, and, for ease of notation, we enumerate the rows and columns of the $(n_{1} - p + 2) \times (n - n_{1} - 1)$ matrix $\mathbf{\Pi}$ by $\{p-1, p, \ldots, n_{1}\}$ and $\{n_{1}+1, \ldots, n-1\}$, respectively.  
An important special case of \eqref{sec2_OTC} is when $L(\mathbf{v}_{i}(\mathbf{w}), \mathbf{v}_{j}(\mathbf{w})) = \Vert \mathbf{v}_{i}(\mathbf{w}) - \mathbf{v}_{j}(\mathbf{w})\Vert^{k}$, for some $k \geq 1$.
In this case, we have $\mathsf{OTC}(\hat{\mu}_{\mathrm{pre}}, \hat{\mu}_{\mathrm{post}}) = \mathcal{W}_{k}^{k}(\hat{\mu}_{\mathrm{pre}}, \hat{\mu}_{\mathrm{post}})$.

Our goal is to find the imputation that minimizes the aforementioned discrepancy. Thus, we solve for
\begin{align} \label{sec2_what}
    \hat{\mathbf{w}} := (\hat{w}_{0}, \ldots \hat{w}_{n-1})^{\top} = \arg\min_{\mathbf{w} \in \mathcal{C}} \mathsf{OTC}(\hat{\mu}_{\mathrm{pre}}(\mathbf{w}), \hat{\mu}_{\mathrm{post}}(\mathbf{w})),
\end{align}
where the constraint set $\mathcal{C}$ delineates the admissible imputations. 
Typically, it ensures the imputation to agree with the observed data, in which case $\mathcal{C} = \{\mathbf{w}: w_{t} = x_{t}, t \notin \mathcal{M}\}$. 
It may also reflect the side information available.
For example, if $x_{t}$ are monthly sales data, the knowledge of quarterly sales define a set of linear constraints for the missing values. 
Another example is when $\{\mathbf{x}_{t}\}$ is a $d$-dimensional compositional time series such as the market share of several companies in an industry. 
It follows that $\mathbf{x}_{t}^{\top}\mathbf{1} = 1$ for all $t$. 
Then, by defining $\mathcal{C} = \{\mathbf{W} \in \mathbb{R}^{n \times d}: \mathbf{W}\mathbf{1}_{d} = \mathbf{1}_{n}\}$, where $\mathbf{1}_{n} = (1,\ldots,1)^{\top} \in \mathbb{R}^{n}$, the set of admissible imputations are restricted to be compositional as well. 

One may also consider the following regularized version of \eqref{sec2_what}.
\begin{align} \label{sec2_whatreg}
    \hat{\mathbf{w}} := (\hat{w}_{0}, \ldots \hat{w}_{n-1})^{\top} = \arg\min_{\mathbf{w} \in \mathcal{C}} \left\{ \mathsf{OTC}(\hat{\mu}_{\mathrm{pre}}(\mathbf{w}), \hat{\mu}_{\mathrm{post}}(\mathbf{w})) + \frac{\lambda}{2} \Vert \mathbf{w} \Vert^{2} \right\},    
\end{align}
where the regularization parameter $\lambda > 0$ is small. 
As we will see in the next subsection, the $\ell_{2}$-regularization term enhances the numerical stability of the alternating minimization algorithm.
Since $L$ is often chosen to induce the Wasserstein distances between the marginal distributions of a time series, the solution $\hat{\mathbf{w}}$ to \eqref{sec2_what} or \eqref{sec2_whatreg} is called the temporal Wasserstein imputation (TWI) throughout this paper.

In this formulation, it is clear that the temporal Wasserstein imputation, which seeks to equate the marginal distributions on both sides of $n_{1}$, is nonparametric.
With a mildly large $p$, the $p$-dimensional marginal distributions can sometimes fully characterize the nonlinear dynamics of a Markov process (such as a finite-order threshold AR process), which would otherwise need many parameters using ARMA approximation.

\subsection{Alternating minimization algorithm}

The optimization landscapes of \eqref{sec2_what} and \eqref{sec2_whatreg} are complicated because the optimal transport cost is itself the minimum value of a linear program \eqref{sec2_OTC}, and there are potentially many variables to optimize. 
To tackle this issue, we propose to solve the nested minimization problem using alternating minimization.
Since \eqref{sec2_what} is a special case of \eqref{sec2_whatreg} with $\lambda = 0$, we focus our discussion on \eqref{sec2_whatreg}.

To begin, observe that with an initialized $\hat{\mathbf{w}}^{(0)}$, problem \eqref{sec2_OTC} is exactly the discrete optimal transport problem. 
Thus, at iteration $k=1,2,\ldots$, treating $\mathbf{w} = \hat{\mathbf{w}}^{(k-1)}$ as fixed, we may compute the optimizer $\hat{\mathbf{\Pi}}^{(k)}$ of \eqref{sec2_OTC} efficiently by existing algorithms \citep{Cuturi2013, peyre2019}. 
This corresponds to the inner optimization problem of \eqref{sec2_whatreg}.
Next, we optimize \eqref{sec2_whatreg} over $\mathbf{w}$, taking $\mathbf{\Pi} = \hat{\mathbf{\Pi}}^{(k)}$ as fixed. 
This procedure is repeated until convergence (discussed in the next section).
Let 
\begin{align} \label{Sec3_2-F}
    F(\mathbf{w}, \mathbf{\Pi}) = \sum_{i=p-1}^{n_{1}} \sum_{j=n_{1}+1}^{n-1} \pi_{ij} L(\mathbf{v}_{i}(\mathbf{w}), \mathbf{v}_{j}(\mathbf{w})) + \frac{\lambda}{2} \Vert \mathbf{w} \Vert^{2},
\end{align}
and the procedure described above can be summarized in Algorithm \ref{alg:WI}. 

\spacingset{1}
\begin{algorithm}[t]
\DontPrintSemicolon
    \KwInput{Cut-off $n_{1}$, number of lags $p$}
    \KwInit{$\mathbf{w}^{(0)} \in \mathcal{C}$}
    \For{$k = 1,2, \ldots$}{
        Step (a): Solve for $\hat{\mathbf{\Pi}}^{(k)}$ in problem \eqref{sec2_OTC}, treating $\mathbf{w} = \hat{\mathbf{w}}^{(k-1)}$ as fixed.  \\
        Step (b): Solve for $\hat{\mathbf{w}}^{(k)} \in \arg\min_{\mathbf{w} \in \mathcal{C}} F(\mathbf{w}, \hat{\mathbf{\Pi}}^{(k)})$. \\
        \textbf{Break} if convergence criterion is met and set $\hat{\mathbf{w}} = \hat{\mathbf{w}}^{(k)}$
    }    
    \KwOutput{Imputed time series $\hat{\mathbf{w}}$}
\caption{Temporal Wasserstein imputation with alternating minimization}
\label{alg:WI}
\end{algorithm}
\spacingset{2}

One advantage of the alternating minimization scheme is that the subproblems (steps (a) and (b) in Algorithm \ref{alg:WI}) are easy to optimize when $L$ satisfies some regularity conditions.
Suppose $L(\mathbf{v}_{i}, \mathbf{v}_{j}) = \Vert \mathbf{v}_{i} - \mathbf{v}_{j} \Vert^{2}$.
Then it is not difficult to show that $F(\mathbf{w}, \mathbf{\Pi}) = \mathbf{w}^{\top} \mathbf{H}(\mathbf{\Pi}) \mathbf{w}$, where $\mathbf{H}(\mathbf{\Pi}) = \sum_{h=0}^{p-1}\mathbf{A}_{h}(\mathbf{\Pi}) + \frac{\lambda}{2} \mathbf{I}_{n}$ and
\begin{align*}
    \mathbf{A}_{h}(\mathbf{\Pi}) &= 
    \begin{pmatrix}
        \mathbf{0}_{p-1-h} & & & \\
        & \frac{1}{n_{1}-p+2}\mathbf{I}_{n_{1}-p+2} & -\mathbf{\Pi} & \\
        & -\mathbf{\Pi}^{\top} & \frac{1}{n - n_{1} - 1}\mathbf{I}_{n-n_{1}-1} & \\
        & & & \mathbf{0}_{h}
    \end{pmatrix} \in \mathbb{R}^{n \times n}, \quad h = 0, 1, \ldots, p-1,
\end{align*}
in which all unspecified entries in $\mathbf{A}_{h}(\mathbf{\Pi})$ are zero. 
Hence $F$ is quadratic in $\mathbf{w}$.
In fact, it can be shown that $\mathbf{H}(\mathbf{\Pi})$ is nonnegative definite for any $\mathbf{\Pi}$ satisfying the coupling constraints, and $F$ is strongly convex in $\mathbf{w}$ if $\lambda > 0$ (see also Proposition \ref{prop::biconvex} in Section \ref{Sec::theory}). 
Thus many efficient solvers are applicable in this step \citep{boyd2004convex}. 
If, in addition, the constraint set $\mathcal{C}$ is defined through a system of linear equations, i.e.,
    $\mathcal{C} = \{\mathbf{w} \in \mathbb{R}^{n}: \mathbf{Kw} = \mathbf{b}\}$
where $\mathbf{K}$ and $\mathbf{b}$ are known, then step (b) in Algorithm \ref{alg:WI} has the closed form
\begin{align} \label{Sec3_closedform}
    \hat{\mathbf{w}}^{(k)} = \mathbf{H}^{-1}\mathbf{K}^{\top}(\mathbf{K}\mathbf{H}^{-1}\mathbf{K}^{\top})^{-1}\mathbf{b},
\end{align}
where $\mathbf{H} = \mathbf{H}(\hat{\mathbf{\Pi}}^{(k)})$.
Suppose further that $\mathcal{C} = \{\mathbf{w}: w_{t} = x_{t}, t \notin \mathcal{M}\}$, corresponding to the case of no side information. 
The solution $(\hat{w}_{0}, \ldots, \hat{w}_{n-1})$ to the subproblem in step (b) must satisfy the following system of linear equations. 
\begin{equation} \label{Sec3_2-1}
\left\{
\begin{aligned}
    \hat{w}_{s} =& \frac{\sum_{i=s\vee(p-1)}^{s+p-1} \sum_{n_{1}+1}^{n-1} \hat{\pi}_{ij}\hat{w}_{j-i+s}}{\frac{p\wedge(s+1)}{n_{1}-p+2} + \frac{\lambda}{2}}, \quad \mbox{if } 0 \leq s \leq n_{1}-p+1, s \in \mathcal{M} \\
    \hat{w}_{s} =& \frac{\sum_{i=s}^{n_{1}} \sum_{j=n_{1}+1}^{n-1} \hat{\pi}_{ij}\hat{w}_{j-i+s}  + \sum_{i=p-1}^{n_{1}} \sum_{j=n_{1}+1}^{s+p-1} \hat{\pi}_{ij} \hat{w}_{i-j+s}}{\frac{n_{1}-s+1}{n_{1}-p+2} + \frac{s+p-n_{1}+1}{n-n_{1}-1} + \frac{\lambda}{2}}, \\
    & \hspace{5cm} \mbox{if } n_{1}-p+2 \leq s \leq n_{1}, s \in \mathcal{M} \\
    \hat{w}_{s} =& \frac{\sum_{i=p-1}^{n_{1}} \sum_{j=s}^{(s+p-1)\wedge(n-1)} \hat{\pi}_{ij}\hat{w}_{i-j+s}}{\frac{p \wedge (n-s)}{n - n_{1} - 1} + \frac{\lambda}{2}}, \quad \mbox{if } n_{1}+1 \leq s \leq n-1, s \in \mathcal{M} \\
    \hat{w}_{s} =& x_{s},\quad \mbox{if } s \notin \mathcal{M}
\end{aligned}
\right.
\end{equation}
where $\hat{\mathbf{\Pi}} = (\hat{\pi}_{ij})$ is treated as fixed, and for $a, b \in \mathbb{R}$, $a \wedge b = \min\{a, b\}$, $a \vee b = \max\{a, b\}$. We make a few remarks regarding \eqref{Sec3_2-1}.
\begin{remark} \label{remark-1}
    \normalfont  Let us contrast TWI with the optimal interpolator. Equation \eqref{Sec3_2-1} shows, in general, the temporal Wasserstein imputation is a linear function of \textit{many} observed data $\{x_{t}: t \notin \mathcal{M}\}$, even when $p$ is small.
    In contrast, the optimal interpolator is often a function of nearby observed data.
    For instance, suppose the underlying time series is a stationary AR(1) process: $x_{t} = \phi x_{t-1} + \epsilon_{t}$, where $\{\epsilon_{t}\}$ is a mean-zero, unit-variance white noise.
    If $x_{n_{1}} = \bigstar$ is the only missing value in the time series, then the optimal interpolator \citep{pena1991} of $x_{n_{1}}$ is $w_{n_{1}, \mathrm{opt}} = \phi(x_{n_{1}-1} + x_{n_{1}+1})/(1 + \phi^{2})$.
    In contrast, \eqref{Sec3_2-1} shows that the temporal Wasserstein imputation (with $p=2$ and $\lambda=0$) is given by 
    \begin{align*}
        \hat{w}_{n_{1}} = \frac{1}{n_{1}^{-1} + (n - n_{1} - 1)^{-1}}\left( \sum_{t=0}^{n_{1}-1} \hat{\pi}_{t+1,n_{1}+1}x_{t} + \sum_{t=n_{1}+1}^{n-1}\hat{\pi}_{n_{1},t}x_{t} \right),
    \end{align*}
    which is a convex combination of all observed values.
    This is expected because, as a nonparametric method, TWI uses all data to learn the underlying dynamics.
    Note that convex combination appears naturally in imputation problems. For instance, in synthetic control, a convex combination of untreated units is used to estimate the counterfactual (see \citealp{hur2024, Abadie2021}).
\end{remark}

\begin{remark}
    \normalfont (EM interpretation) Let $(\hat{\mathbf{w}}, \hat{\mathbf{\Pi}})$ be a fixed point of the alternating minimization algorithm (Algorithm \ref{alg:WI}). That is, $\hat{\mathbf{w}} \in \arg\min_{\mathbf{w} \in \mathcal{C}} F(\mathbf{w}, \hat{\mathbf{\Pi}})$, and $\hat{\mathbf{\Pi}} \in \arg\min_{\mathbf{\Pi}} F(\hat{\mathbf{w}}, \mathbf{\Pi})$.
    In addition, suppose $\lambda = 0$ for a moment. 
    Let $\mathbf{u} = (u_{0}, u_{1}, \ldots, u_{p-1})$ be a random vector whose distribution is $\hat{\mu}_{\mathrm{pre}}$ and $\mathbf{v} = (v_{0}, \ldots, v_{p-1})$ a random vector whose distribution is $\hat{\mu}_{\mathrm{post}}$. 
    Furthermore, let the joint distribution of $(\mathbf{u}, \mathbf{v})$ be given by the optimal transport coupling between $\hat{\mu}_{\mathrm{pre}}$ and $\hat{\mu}_{\mathrm{post}}$, identified as $\hat{\mathbf{\Pi}}$ (viewed as a joint distribution between the two discrete measures).
    Consider an index $s \in \mathcal{M}$ with $p-1 \leq s \leq n_{1}-p+1$.
    By \eqref{Sec3_2-1}, it is not difficult to see that
    \begin{align}
        \hat{w}_{s} =& \frac{1}{p} \sum_{i=s}^{s+p-1} \sum_{j=n_{1}+1}^{n-1} \frac{\hat{\pi}_{ij}}{\sum_{j=n_{1}+1}^{n-1} \hat{\pi}_{ij}} \hat{w}_{j-i+s} \notag \\
        =& \frac{1}{p} \sum_{h=0}^{p-1} \mathbb{E}_{\hat{\mathbf{\Pi}}}(v_{h}|\mathbf{u} = (\hat{w}_{s+h}, \ldots, \hat{w}_{s}, 
        \ldots, \hat{w}_{s+h-p+1})). \label{Sec3_EM-2}
    \end{align}
    Equation \eqref{Sec3_EM-2} shows that the temporal Wasserstein imputation $\hat{w}_{s}$ is essentially an average of conditional expectations.
    Thus, Algorithm \ref{alg:WI} is closely related to the EM algorithm: To minimize the Wasserstein distance $\mathbb{E}_{\mathbf{\Pi}}\Vert \mathbf{u} - \mathbf{v}\Vert^{2}$, step (b) computes the conditional expectations under the current optimal transport coupling $\hat{\mathbf{\Pi}}$ while step (a) minimizes over the coupling space.  
    Similar formulas to \eqref{Sec3_EM-2} can be derived for other time indices $s$ with appropriate modifications.
\end{remark}


Because of the optimization formulation, it is straightforward to extend TWI to multivariate time series. 
Suppose the multivariate time series of interest is $\mathbf{x}_{t} = (x_{t,1}, \ldots, x_{t,d})^{\top}$.
To compute the imputation $\mathbf{W} = (\mathbf{w}_{0}, \mathbf{w}_{1}, \ldots, \mathbf{w}_{n-1})$ where $\mathbf{w}_{t} = (w_{t,1}, \ldots, w_{t,d})^{\top}$, we may replace $\mathbf{v}_{i}(\mathbf{w})$ in \eqref{sec2_OTC} by $\mathbf{v}_{i}(\mathbf{W}) := (w_{i,1}, \ldots, w_{i-p+1,1}, w_{i,2}, \ldots, w_{i-p+1,2}, \ldots, w_{i-p+1,d})^{\top}$.
Then Algorithm \ref{alg:WI} can still be applied. 
Since the multivariate information is summarized by the distance matrix $\mathbf{L} = (l_{ij})_{0 \leq i \leq n_{1}, n_{1}+1 \leq j \leq n-1}$, where $l_{ij} = L(\mathbf{v}_{i}(\mathbf{W}), \mathbf{v}_{j}(\mathbf{W}))$, regardless of the dimension $d$, step (a) in Algorithm \ref{alg:WI} remains unchanged. 
In the notable special case where $L(\mathbf{u}, \mathbf{v}) = \Vert \mathbf{u} - \mathbf{v} \Vert^{2}$, \eqref{Sec3_2-F} becomes
\begin{align*}
    F(\mathbf{W}, \mathbf{\Pi}) = \sum_{l=1}^{d} \sum_{i = p-1}^{n_{1}} \sum_{j=n_{1}+1}^{n-1} \pi_{ij} \left[ \sum_{h=0}^{p-1} (w_{i-h,l} - w_{j-h,l})^{2} \right] + \frac{\lambda}{2} \Vert \mathbf{W} \Vert_{F}^{2},
\end{align*}
where $\Vert \cdot \Vert_{F}$ is the Frobenius norm.
Straightforward calculations show that fixing $\mathbf{\Pi} = \hat{\mathbf{\Pi}}$, if $\hat{\mathbf{W}}$ is a solution to step (b) of Algorithm \ref{alg:WI}, then for each $l \in \{1,2,\ldots, d\}$, $(\hat{w}_{0,l}, \hat{w}_{1,l}, \ldots, \hat{w}_{n-1,l})$ also satisfies the system of linear equations \eqref{Sec3_2-1} in the case of no side information.
Therefore, in step (b) of Algorithm \ref{alg:WI}, each component in the multivariate time series can be imputed in parallel, lowering the computational overhead. 
Thus, the algorithm is fast to implement in the multivariate setting.

Before closing this subsection, we discuss the imputation of nonstationary time series.
For an $I(1)$ process $\{x_{t}\}$, $y_{t} = x_{t} - x_{t-1}$ is stationary and one can apply imputation methods to the first-differenced series $\{y_{t}\}$.
More generally, one can apply TWI to the series $y_{t} = \phi(B) x_{t}$, where $\phi(z) = (1-z)^{d}$ if $x_{t}$ is an $I(d)$ process, and $B$ is the back-shift operator.
Then, an imputation for $x_{t}$ can be recovered by inverting $\phi(B)$. 
However, in this case, the number of missing data in $\{y_{t}\}$ multiplies, and at least $d+1$ contiguous observed values in $\{x_{t}\}$ is needed to obtain an observed value for $\{y_{t}\}$.
To mitigate the information loss, we can encode the information from the raw data via an admissible set $\mathcal{C}$.
For example, consider the $I(1)$ case: If $x_{t}$ is observed, then since $y_{1} + \ldots + y_{t} = x_{t} - x_{0}$, we may restrict the imputation $\{w_{t}\}$ for $\{y_{t}\}$ to satisfy $w_{1}+\ldots+w_{t} = x_{t}- x_{0}$ (assuming $x_{0}$ is observed), which constitutes a set of linear constraints $\mathcal{C}$.
Finally, although we have motivated TWI using the stationarity of the underlying process, it fares quite well with certain nonstationary processes, in addition to unit-root series, such as cyclic time series (see Section \ref{Sec::Simulation} for a simulation study).

\subsection{Related works} \label{Subsec::relatedworks}

We shall compare TWI with some existing nonparametric imputation approaches.
\citet{Carrizosa2013} proposed a moment-matching method, which tries to minimize the discrepancies between the moments (e.g.~autocorrelations) of the imputed series and some pre-specified target values. 
However, their non-convex algorithm is computationally intensive, and good target values for the moments may not be readily available in practice.
In contrast, TWI is completely data-driven, and enjoys the computational efficiency of the alternating minimization algorithm and computational optimal transport.

Singular spectrum analysis (SSA) methods \citep{golyandina2018, golyandina2001, Kondrashov2006} are nonparametric techniques that perform spectral analysis on the trajectory matrix of the time series $\mathbf{x} = (x_{0}, x_{1}, \ldots, x_{n-1})$. Specifically, the trajectory matrix is defined as
$\mathbf{T} = (\mathbf{y}_{L-1}, \mathbf{y}_{L}, \ldots, \mathbf{y}_{n-1})$, where $\mathbf{y}_{j} = (x_{j-L+1}, x_{j-L+2}, \ldots, x_{j})^{\top}$ for $j=L-1, \ldots, n-1$, and 
$L$ is called the window length. SSA methods exploit the possible low-rank structure of $\mathbf{T}$.
When $\mathbf{T}$ contains missing values, columns with missing entries are typically removed (shaped SSA, \citealp{golyandina2018}), followed by direct or iterative imputation \citep{Kondrashov2006}.
For some missing patterns this could result in too few complete columns, forcing $L$ to be small (violating a common guideline of $L \approx n / 2$) and compromising the effectiveness of SSA.
In contrast, TWI accommodates any missing pattern without discarding substantial data.
While SSA is typically effective for imputing structural time series, TWI prevails in learning stochastic time series (see Section \ref{Sec::Simulation} for our simulation study).
However, due to the convex nature of TWI (see Remark \ref{remark-1}), it cannot handle series with explosive deterministic trends.
Hence, it is advisable to remove the trends before applying TWI, or use SSA-based methods when the deterministic patterns are of primary interest.

A concurrent and independent work of \citet{wang2025} proposed the Proximal Spectrum Wasserstein for Imputation (PSW-I), which, following the framework of \citet{Muzellec2020}, imputes time series using optimal transport techniques.
While both PSW-I and TWI focus on distribution matching, their implementations differ significantly.
PSW-I uses a gradient descent algorithm that relies on automatic differentiation and some ad hoc gradient-stabilizing techniques, due to the difficulty in computing the gradient of the (regularized) Wasserstein distance with respect to the imputation. 
TWI, by comparison, uses a straightforward algorithm with minimal tuning. 
Although \citet{wang2025} includes a convergence result, its guarantees hinge on restrictive assumptions. In contrast, TWI’s convergence and statistical properties are established under more standard conditions, as detailed in the next section.


\section{Theoretical properties} \label{Sec::theory}
This section investigates the optimization and statistical properties of TWI. 
We first show Algorithm \ref{alg:WI} converges to critical points.
We then establish the asymptotic consistency of TWI when the missing data concentrate on one side of $n_{1}$.
For general non-systematic missing patterns, a case study of a two-state Markov chain example shows that TWI can identify the underlying marginal distributions under some natural assumptions. 

\subsection{Optimization and statistical guarantees} \label{Subsec::optim_theory}

We begin by studying the optimization properties of Algorithm \ref{alg:WI}. All proofs are relegated to Section S1 of the supplementary material.
The following result shows Algorithm \ref{alg:WI} is guaranteed to find a local minimum. 
\begin{prop} \label{prop::biconvex}
Suppose $\mathcal{C}$ is a convex set and $F(.,.)$ is defined in (\ref{Sec3_2-F}). Then,
\begin{itemize}
    \item[(a)] (monotonicity) $F(\hat{\mathbf{w}}^{(t)}, \hat{\mathbf{\Pi}}^{(t+1)}) \leq F(\hat{\mathbf{w}}^{(t)}, \hat{\mathbf{\Pi}}^{(t)}) \leq F(\hat{\mathbf{w}}^{(t-1)}, \hat{\mathbf{\Pi}}^{(t)})$, $t=1,2,\ldots$;
    \item[(b)] every limit point $(\hat{\mathbf{w}}, \hat{\mathbf{\Pi}})$ of $\{(\hat{\mathbf{w}}^{(t+1)}, \hat{\mathbf{\Pi}}^{(t)}):t=1,2,\ldots\}$ is a critical point of $F$. i.e.,
    \begin{align*}
        \frac{\partial F(\hat{\mathbf{w}}, \hat{\mathbf{\Pi}})}{ \partial\mathbf{w}} \cdot (\mathbf{w} - \hat{\mathbf{w}}) \geq 0, \quad
        \frac{\partial F(\hat{\mathbf{w}}, \hat{\mathbf{\Pi}})}{ \partial \mathrm{vec}(\mathbf{\Pi})} \cdot \mathrm{vec}(\mathbf{\Pi} - \hat{\mathbf{\Pi}}) \geq 0
    \end{align*}
    for any $\mathbf{w} \in \mathcal{C}$ and $\mathbf{\Pi}$ satisfying the coupling constraints. 
\item[(c)] If, in addition, the cost function $L$ takes the form $L(\mathbf{u}, \mathbf{v}) = \sum_{k=1}^{p} \ell(u_{k} - v_{k})$, where $\mathbf{u} = (u_{1}, \ldots, u_{p})^{\top}$, $\mathbf{v} = (v_{1}, \ldots, v_{p})^{\top}$, and $\ell$ is twice differentiable and convex, then 
    $F(\mathbf{w}, \mathbf{\Pi})$ is biconvex in $\mathbf{w}$ and $\mathbf{\Pi}$.
\end{itemize}
\end{prop}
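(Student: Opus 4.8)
The plan is to prove each of the three parts in turn, with parts (a) and (b) following from the general theory of block coordinate descent (alternating minimization) and part (c) requiring a direct computation.

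\textbf{Part (a), monotonicity.} This is immediate from the definition of the algorithm. The inequality $F(\hat{\mathbf{w}}^{(t)}, \hat{\mathbf{\Pi}}^{(t+1)}) \leq F(\hat{\mathbf{w}}^{(t)}, \hat{\mathbf{\Pi}}^{(t)})$ holds because step (a) computes $\hat{\mathbf{\Pi}}^{(t+1)}$ as a minimizer of $F(\hat{\mathbf{w}}^{(t)}, \cdot)$ over the coupling polytope, so in particular its value is no larger than at $\hat{\mathbf{\Pi}}^{(t)}$ (which is feasible). Symmetrically, $F(\hat{\mathbf{w}}^{(t)}, \hat{\mathbf{\Pi}}^{(t)}) \leq F(\hat{\mathbf{w}}^{(t-1)}, \hat{\mathbf{\Pi}}^{(t)})$ because step (b) computes $\hat{\mathbf{w}}^{(t)}$ as a minimizer of $F(\cdot, \hat{\mathbf{\Pi}}^{(t)})$ over $\mathcal{C}$, and $\hat{\mathbf{w}}^{(t-1)} \in \mathcal{C}$. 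Here convexity of $\mathcal{C}$ is not even needed; only feasibility of the previous iterates.

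\textbf{Part (b), convergence to critical points.} I would invoke (or reprove in the supplement) the standard result for two-block Gauss--Seidel / alternating minimization: if each block update is an exact minimization over a closed convex constraint set and $F$ is continuously differentiable, then every limit point of the iterate sequence is a coordinatewise (Nash) stationary point, hence satisfies the stated pair of variational inequalities. The argument: let $(\hat{\mathbf{w}}, \hat{\mathbf{\Pi}})$ be a limit point along a subsequence $t_k$. Since step (b) gives $\hat{\mathbf{w}}^{(t_k)} \in \arg\min_{\mathbf{w} \in \mathcal{C}} F(\mathbf{w}, \hat{\mathbf{\Pi}}^{(t_k)})$, the first-order optimality condition (convexity of $\mathcal{C}$, differentiability of $F$ in $\mathbf{w}$ — $F$ is a polynomial in $\mathbf{w}$ when $L$ is smooth, and in the quadratic case $F(\mathbf{w},\mathbf{\Pi}) = \mathbf{w}^\top \mathbf{H}(\mathbf{\Pi})\mathbf{w}$ is manifestly $C^\infty$) gives $\nabla_{\mathbf{w}} F(\hat{\mathbf{w}}^{(t_k)}, \hat{\mathbf{\Pi}}^{(t_k)}) \cdot (\mathbf{w} - \hat{\mathbf{w}}^{(t_k)}) \geq 0$ for all $\mathbf{w} \in \mathcal{C}$; passing to the limit using continuity of $\nabla_{\mathbf{w}} F$ yields the first inequality. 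The second inequality follows the same way from step (a), using that the coupling polytope is closed and convex and that $F$ is linear (hence smooth) in $\mathrm{vec}(\mathbf{\Pi})$. One technical point to address is that part (b) refers to limit points of the \emph{interleaved} sequence $(\hat{\mathbf{w}}^{(t+1)}, \hat{\mathbf{\Pi}}^{(t)})$, which is exactly the pairing for which $\hat{\mathbf{w}}^{(t+1)}$ minimizes $F(\cdot, \hat{\mathbf{\Pi}}^{(t)})$ over $\mathbf{w}$ \emph{and} (by step (a) at iteration $t$, applied with $\hat{\mathbf{w}}^{(t)}$, plus an argument relating consecutive $\mathbf{w}$-iterates — or more cleanly, noting that one also needs $\hat{\mathbf{\Pi}}^{(t)}$ to be near-optimal for $\hat{\mathbf{w}}^{(t+1)}$, which requires $\hat{\mathbf{w}}^{(t+1)} - \hat{\mathbf{w}}^{(t)} \to 0$). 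Establishing $\hat{\mathbf{w}}^{(t+1)} - \hat{\mathbf{w}}^{(t)} \to 0$ is where strong convexity in $\mathbf{w}$ (guaranteed by $\lambda > 0$, or more generally the nonnegative-definiteness of $\mathbf{H}(\mathbf{\Pi})$ plus a coercivity/compactness argument) enters: the monotone decrease of $F$ from part (a), together with a strong-convexity lower bound $F(\hat{\mathbf{w}}^{(t-1)}, \hat{\mathbf{\Pi}}^{(t)}) - F(\hat{\mathbf{w}}^{(t)}, \hat{\mathbf{\Pi}}^{(t)}) \geq \tfrac{\lambda}{2}\|\hat{\mathbf{w}}^{(t)} - \hat{\mathbf{w}}^{(t-1)}\|^2$, forces the successive differences to vanish. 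This interplay — reconciling the interleaved indexing with the block-optimality conditions via vanishing successive differences — is the step I expect to be the main obstacle and the one needing the most care in the write-up.

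\textbf{Part (c), biconvexity.} Convexity in $\mathbf{\Pi}$ for fixed $\mathbf{w}$ is trivial: $F(\mathbf{w}, \cdot)$ is an affine (linear plus constant) function of $\mathbf{\Pi}$. For convexity in $\mathbf{w}$ for fixed $\mathbf{\Pi}$ (satisfying the coupling constraints), I would argue as follows. Write $F(\mathbf{w}, \mathbf{\Pi}) = \sum_{k=1}^{p}\sum_{i,j} \pi_{ij}\,\ell(w_{i-k+1} - w_{j-k+1}) + \tfrac{\lambda}{2}\|\mathbf{w}\|^2$. Each summand $\mathbf{w} \mapsto \ell(w_{i-k+1} - w_{j-k+1})$ is the composition of the convex function $\ell$ with a linear map $\mathbf{w} \mapsto w_{i-k+1} - w_{j-k+1}$, hence convex; since $\pi_{ij} \geq 0$, the weighted sum is convex, and adding $\tfrac{\lambda}{2}\|\mathbf{w}\|^2$ preserves convexity (and yields strong convexity when $\lambda > 0$). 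Thus $F$ is convex in each argument separately, i.e., biconvex. (For the quadratic special case $\ell(u) = u^2$, this specializes to the assertion that $\mathbf{H}(\mathbf{\Pi}) \succeq 0$, which can alternatively be seen directly by exhibiting $\mathbf{w}^\top \mathbf{H}(\mathbf{\Pi})\mathbf{w} = \sum_{k}\sum_{i,j}\pi_{ij}(w_{i-k+1}-w_{j-k+1})^2 + \tfrac{\lambda}{2}\|\mathbf{w}\|^2 \geq 0$, using $\pi_{ij} \geq 0$; no need for the doubly-stochastic structure beyond nonnegativity.) Note biconvexity does \emph{not} imply joint convexity, consistent with the fact that only convergence to critical points — not global optimality — is claimed in parts (a)--(b).
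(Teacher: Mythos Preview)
Your arguments for (a) and (b) match the paper's: the paper states that (a) ``holds by construction of the algorithm'' and that (b) follows by citing the two-block Gauss--Seidel convergence result of Grippo and Sciandrone (2000). Your additional discussion of the interleaved indexing and the role of strong convexity in forcing $\|\hat{\mathbf{w}}^{(t)}-\hat{\mathbf{w}}^{(t-1)}\|\to 0$ is more detailed than what the paper writes down (it simply invokes the reference), but it is consistent with how that reference proceeds.

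For (c) your route genuinely differs from the paper's. You argue directly that each term $\mathbf{w}\mapsto \ell(w_{i-k+1}-w_{j-k+1})$ is a convex function composed with a linear map, hence convex, and that nonnegative combinations preserve convexity. The paper instead decomposes $F=\sum_{h}F_h + \tfrac{\lambda}{2}\|\mathbf{w}\|^2$, computes the Hessian of each $F_h$ explicitly, observes that it is weakly diagonally dominant with nonnegative diagonal (using $\ell''\geq 0$ and $\pi_{ij}\geq 0$), and then applies the Gershgorin circle theorem to conclude positive semidefiniteness. Your argument is shorter and does not actually require $\ell$ to be twice differentiable; the paper's approach, while more computational, has the side benefit of yielding the Hessian structure explicitly (which connects to the $\mathbf{H}(\mathbf{\Pi})$ matrices used elsewhere in Section~3). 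Both are valid.
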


Due to the biconvexity of $F$, one can employ off-the-shelf solvers for the subproblems in Algorithm \ref{alg:WI}.
Moreover, it follows directly from the proof that if $\lambda > 0$, $F(\mathbf{w}, \mathbf{\Pi})$ is $\lambda$-strongly convex in $\mathbf{w}$. Hence, step (b) in Algorithm \ref{alg:WI} admits a unique solution in each iteration and can be solved efficiently. 
It is straightforward to generalize Proposition \ref{prop::biconvex} for multivariate time series, as discussed in Section \ref{Sec::Method}.

Unlike typical non-convex optimization problems, we are only interested in the subset of the global minimizers of $F$ which have the correct marginal distributions 
(see Section \ref{Subsec::markov}). 
The problem simplifies when the missing data happen to lie on the same side of $t=n_{1}$, in which case the global minimizer of $F$ achieves asymptotic consistency. 

\begin{prop} \label{prop::oneside}
    Assume (A1) $\{\tilde{x}_{t}\}$ is stationary and ergodic, and (A2) $\mathbb{E}|\tilde{x}_{t}|^{k}<\infty$ for some $k \geq 1$. 
    Let $\mathcal{M} = \mathcal{M}_{n}$ be the set of indices corresponding to the missing data. Suppose for any $\mathbf{w} \in \mathcal{C}$, $w_{t} = \tilde{x}_{t}$ if $t \notin \mathcal{M}_{n}$.  
    Suppose further $n$ and $n_{1}$ satisfy $n_{1} \rightarrow \infty$ and $n - n_{1} \rightarrow \infty$.
    If either $\mathcal{M}_{n} \subseteq \{0,1,\ldots, n_{1}\}$ or $\mathcal{M}_{n} \subseteq \{n_{1}+1, \ldots, n-1\}$ (but not both),
    then, the global minimizer $(\hat{\mathbf{w}}, \hat{\mathbf{\Pi}})$ of $F$ with $\lambda = 0$ and $L(\mathbf{u}, \mathbf{v}) = \Vert \mathbf{u} - \mathbf{v} \Vert^{k}$, subject to $\mathbf{w} \in \mathcal{C}$ and the coupling conditions, satisfies
    \begin{align*}
        \mathcal{W}_{k}\left( \frac{1}{n_{1}-p+2}\sum_{t=p-1}^{n_{1}} \delta_{\mathbf{v}_{t}(\hat{\mathbf{w}})}, \mu_{p} \right) + \mathcal{W}_{k}\left( \frac{1}{n-n_{1}-1}\sum_{t=n_{1}+1}^{n-1} \delta_{\mathbf{v}_{t}(\hat{\mathbf{w}})}, \mu_{p} \right) \rightarrow 0 \quad \mbox{almost surely}.
    \end{align*}
\end{prop}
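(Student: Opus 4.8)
The plan is to exploit the fact that, under the one-sided missingness assumption, one of the two empirical marginals is entirely built from the true observed data and hence converges to $\mu_p$ by ergodicity, while the other is free to be chosen by the optimizer. Suppose without loss of generality that $\mathcal{M}_n \subseteq \{n_1+1, \ldots, n-1\}$, so that $w_t = \tilde{x}_t$ for all $t \le n_1$ and only the ``post'' block contains imputed values. Then $\hat{\mu}_{\mathrm{pre}}(\hat{\mathbf{w}})$ coincides with the empirical $p$-dimensional marginal of the observed stretch $(\tilde{x}_0, \ldots, \tilde{x}_{n_1})$. By assumptions (A1)--(A2) and the ergodic theorem applied to the $\mathbb{R}^p$-valued stationary ergodic process $\{\mathbf{v}_t(\tilde{\mathbf{x}})\}$, together with the fact that $\mathcal{W}_k$ metrizes weak convergence plus convergence of $k$-th moments on the space of measures with finite $k$-th moment (as recalled in Section~\ref{Sec::Prelim}), we get $\mathcal{W}_k(\hat{\mu}_{\mathrm{pre}}(\hat{\mathbf{w}}), \mu_p) \to 0$ almost surely as $n_1 \to \infty$. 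This disposes of the first term.

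For the second term, I would argue by comparison with a feasible reference imputation. Since $n - n_1 \to \infty$, the indices $\{n_1+1, \ldots, n-1\}$ form a growing window; the true series $\{\tilde{x}_t\}$ restricted to this window is a valid element of $\mathcal{C}$ (it agrees with the observed data on the complement of $\mathcal{M}_n$), call it $\mathbf{w}^\star$. For $\mathbf{w}^\star$, both empirical marginals are genuine ergodic averages, so $\hat{\mu}_{\mathrm{pre}}(\mathbf{w}^\star) \Rightarrow \mu_p$ and $\hat{\mu}_{\mathrm{post}}(\mathbf{w}^\star) \Rightarrow \mu_p$ in $\mathcal{W}_k$, whence by the triangle inequality $\mathsf{OTC}(\hat{\mu}_{\mathrm{pre}}(\mathbf{w}^\star), \hat{\mu}_{\mathrm{post}}(\mathbf{w}^\star)) = \mathcal{W}_k^k(\hat{\mu}_{\mathrm{pre}}(\mathbf{w}^\star), \hat{\mu}_{\mathrm{post}}(\mathbf{w}^\star)) \to 0$ a.s. Because $\hat{\mathbf{w}}$ is the global minimizer, $\mathcal{W}_k^k(\hat{\mu}_{\mathrm{pre}}(\hat{\mathbf{w}}), \hat{\mu}_{\mathrm{post}}(\hat{\mathbf{w}})) \le \mathcal{W}_k^k(\hat{\mu}_{\mathrm{pre}}(\mathbf{w}^\star), \hat{\mu}_{\mathrm{post}}(\mathbf{w}^\star)) \to 0$. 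Now chain this with the first term: by the triangle inequality for $\mathcal{W}_k$,
\begin{align*}
    \mathcal{W}_k(\hat{\mu}_{\mathrm{post}}(\hat{\mathbf{w}}), \mu_p) \le \mathcal{W}_k(\hat{\mu}_{\mathrm{post}}(\hat{\mathbf{w}}), \hat{\mu}_{\mathrm{pre}}(\hat{\mathbf{w}})) + \mathcal{W}_k(\hat{\mu}_{\mathrm{pre}}(\hat{\mathbf{w}}), \mu_p) \to 0 \quad \text{a.s.},
\end{align*}
and summing the two pieces gives the claim. The symmetric case $\mathcal{M}_n \subseteq \{0,\ldots,n_1\}$ is handled identically with the roles of ``pre'' and ``post'' swapped.

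The main obstacle I anticipate is the uniform integrability / moment-convergence aspect of the $\mathcal{W}_k$ convergence for the optimizer-controlled block: weak convergence of $\hat{\mu}_{\mathrm{post}}(\hat{\mathbf{w}})$ to $\mu_p$ would follow from the argument above, but to conclude $\mathcal{W}_k \to 0$ one also needs $\int \|\mathbf{x}\|^k \, d\hat{\mu}_{\mathrm{post}}(\hat{\mathbf{w}}) \to \int \|\mathbf{x}\|^k \, d\mu_p$, i.e.\ control of the $k$-th moments of the imputed values. Here one must either (i) note that $\mathcal{W}_k(\hat{\mu}_{\mathrm{post}}(\hat{\mathbf{w}}), \mu_p) \to 0$ is exactly what we are proving and the chain of triangle inequalities above already delivers it directly as a statement about $\mathcal{W}_k$ (so no separate moment argument is needed, provided the reference-imputation comparison is phrased entirely in terms of $\mathcal{W}_k$ rather than weak convergence), or (ii) if one instead works through weak convergence, supply a tightness-plus-uniform-integrability argument showing the optimizer cannot push mass to infinity — which is plausible because inflating the imputed values would increase $\mathcal{W}_k^k(\hat{\mu}_{\mathrm{pre}}, \hat{\mu}_{\mathrm{post}})$ away from its vanishing optimal value. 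I would take route (i): keep every estimate at the level of the Wasserstein distance itself, using only the ergodic theorem (which does give $\mathcal{W}_k$-convergence, not merely weak convergence, since the empirical measure of an ergodic sequence with finite $k$-th moment converges in $\mathcal{W}_k$) for the two genuinely-observed blocks, and the global-optimality inequality to transfer smallness of the transport cost to $\hat{\mathbf{w}}$.
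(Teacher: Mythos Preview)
Your proposal is correct and follows essentially the same route as the paper: fix the fully observed block, use the true series as a feasible reference imputation to exploit global optimality, and chain triangle inequalities for $\mathcal{W}_k$; the paper packages the ergodic-theorem-to-$\mathcal{W}_k$-convergence step you invoke as a separate lemma (proving weak convergence via characteristic functions on a countable dense set plus $k$-th moment convergence), and your ``route (i)'' is exactly the right way to sidestep the moment-control worry.
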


Assumption (A1) is commonly used in nonlinear time series analysis \citep{tsay2018}, and the moment condition (A2) is  mild. 
Note that the missing ratio can be asymptotically non-negligible: $\liminf_{n \rightarrow \infty}\sharp(\mathcal{M}_{n})/n > 0$.
In fact, this case is of primary interest because distributional inconsistency would be precluded if $\sharp(\mathcal{M}_{n})/n \rightarrow 0$.
Proposition \ref{prop::oneside} implies the consistency of the empirical marginal distribution of the imputed time series, i.e.,~
$\hat{\mu}_{\mathrm{pre}}(\hat{\mathbf{w}}) \Rightarrow \mu_{p}$ and $\hat{\mu}_{\mathrm{post}}(\hat{\mathbf{w}}) \Rightarrow \mu_{p}$.
Moreover, convergence in Wasserstein distance ensures for any continuous $\varphi$ with $|\varphi(x)| \leq C(1+\Vert x \Vert^{k})$ for all $x \in \mathbb{R}^{p}$ and some $C < \infty$, we have $\int \varphi d\hat{\mu}_{n} \rightarrow \int \varphi d\mu$,
where $\hat{\mu}_{n}$ is either $\hat{\mu}_{\mathrm{pre}}(\hat{\mathbf{w}})$ or $\hat{\mu}_{\mathrm{post}}(\hat{\mathbf{w}})$. 
For such functionals, TWI provides consistent downstream statistics, as summarized in the following corollary.

\begin{corollary}
    Assume that the conditions of Proposition \ref{prop::oneside} hold. Then for continuous $\varphi:\mathbb{R}^{p} \rightarrow \mathbb{R}$ with $|\varphi(x)| \leq C(1+\Vert x \Vert^{k})$ for all $x \in \mathbb{R}^{p}$ and some $C < \infty$, we have 
    $\frac{1}{n-p+1} \sum_{t = p-1}^{n-1} \varphi(\hat{w}_{t}, \ldots, \hat{w}_{t-p+1}) \rightarrow \mathbb{E}\varphi(\tilde{x}_{t}, \ldots, \tilde{x}_{t-p+1})$ a.s., where $\hat{\mathbf{w}}=(\hat{w}_{0}, \ldots, \hat{w}_{n-1})$ is defined in Proposition \ref{prop::oneside}.
\end{corollary}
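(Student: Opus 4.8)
The plan is to reduce the statement to Proposition~\ref{prop::oneside} together with the test-function characterization of Wasserstein convergence recorded just above, and then glue the ``pre'' and ``post'' halves. First I would restrict to the probability-one event on which the conclusion of Proposition~\ref{prop::oneside} holds; since the two summands there are nonnegative, on that event $\mathcal{W}_k(\hat{\mu}_{\mathrm{pre}}(\hat{\mathbf{w}}),\mu_p)\to 0$ and $\mathcal{W}_k(\hat{\mu}_{\mathrm{post}}(\hat{\mathbf{w}}),\mu_p)\to 0$ separately. Because $\mu_p$ has finite $k$-th moment under (A2), I may invoke the fact (noted just before the corollary; see also \citealp[Chpt.~6]{villani2008}) that $\mathcal{W}_k$-convergence to $\mu_p$ implies $\int \varphi\, d\nu_n \to \int \varphi\, d\mu_p$ for every continuous $\varphi$ with $|\varphi(x)|\le C(1+\|x\|^k)$. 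Applying this once with $\nu_n=\hat{\mu}_{\mathrm{pre}}(\hat{\mathbf{w}})$ and once with $\nu_n=\hat{\mu}_{\mathrm{post}}(\hat{\mathbf{w}})$ shows that the ``pre'' average $A_n:=\frac{1}{n_1-p+2}\sum_{t=p-1}^{n_1}\varphi(\mathbf{v}_t(\hat{\mathbf{w}}))$ and the ``post'' average $B_n:=\frac{1}{n-n_1-1}\sum_{t=n_1+1}^{n-1}\varphi(\mathbf{v}_t(\hat{\mathbf{w}}))$ both converge to $m:=\int\varphi\, d\mu_p=\mathbb{E}\varphi(\tilde{x}_t,\ldots,\tilde{x}_{t-p+1})$, the last identity being the definition of $\mu_p$.

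The second step is purely bookkeeping. The index sets $\{p-1,\ldots,n_1\}$ and $\{n_1+1,\ldots,n-1\}$ partition $\{p-1,\ldots,n-1\}$, which has exactly $n-p+1$ elements, so the full average $\frac{1}{n-p+1}\sum_{t=p-1}^{n-1}\varphi(\hat{w}_t,\ldots,\hat{w}_{t-p+1})$ equals the convex combination $\alpha_n A_n+(1-\alpha_n)B_n$ with $\alpha_n=(n_1-p+2)/(n-p+1)\in[0,1]$. Since $A_n\to m$, $B_n\to m$, the elementary estimate $|\alpha_n A_n+(1-\alpha_n)B_n-m|\le\alpha_n|A_n-m|+(1-\alpha_n)|B_n-m|\le\max\{|A_n-m|,|B_n-m|\}$ forces the full average to $m$. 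Because the ambient event has probability one, this is exactly the asserted almost sure convergence.

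I do not expect a genuine obstacle: the corollary is essentially a repackaging of Proposition~\ref{prop::oneside}. The only two points needing care are that the test-function characterization of $\mathcal{W}_k$-convergence must be applied pathwise, on the almost-sure event furnished by the proposition, rather than in expectation; and that the growth exponent in $|\varphi(x)|\le C(1+\|x\|^k)$ is matched to the order $k$ of the Wasserstein distance used there — both immediate. (One could instead argue via uniform integrability: weak convergence of $\hat{\mu}_{\mathrm{pre}},\hat{\mu}_{\mathrm{post}}$ to $\mu_p$ together with convergence of their $k$-th moments makes $\{\|\cdot\|^k\}$ uniformly integrable along the sequence, which again yields $\int\varphi\to\int\varphi\, d\mu_p$; but invoking the cited characterization directly is cleaner.)
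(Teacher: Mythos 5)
Your proposal is correct and follows essentially the same route the paper intends: the paper gives no separate proof of this corollary, but derives it exactly as you do, from the test-function characterization of $\mathcal{W}_k$-convergence stated just before it, applied pathwise to $\hat{\mu}_{\mathrm{pre}}(\hat{\mathbf{w}})$ and $\hat{\mu}_{\mathrm{post}}(\hat{\mathbf{w}})$ on the almost-sure event from Proposition \ref{prop::oneside}. Your convex-combination bookkeeping with weights $(n_1-p+2)/(n-p+1)$ and $(n-n_1-1)/(n-p+1)$ is the right way to make the gluing of the two halves explicit.
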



\subsection{General missing pattern: A two-state Markov chain example} \label{Subsec::markov}
In this subsection, we study the identification of the underlying marginal distributions via a two-state Markov chain example under the general missing pattern where missing data are not confined to one side of time index $n_{1}$.
We connect the non-identification problem to the lack of uniqueness of the solutions to a system of linear equations.
We also show that, under two natural conditions, the correct marginal distributions can still be pinned down by TWI.
To free our discussion from finite-sample complications, we will focus on the distribution level in this subsection, or intuitively speaking, assume ``infinite sample.''

Suppose the time series $\{\tilde{x}_{t}\}_{-\infty < t < \infty}$ under study is a two-state Markov chain.
Specifically, $\tilde{x}_{t} \in \{0, 1\}$ and the transition probabilities are given by $\mathbb{P} (\tilde{x}_{t} = 1 | \tilde{x}_{t-1} = 0) = p$, and $\mathbb{P} (\tilde{x}_{t} = 0 | \tilde{x}_{t-1} = 1) = q$,
where $p, q \in (0, 1)$.
The associated stationary distribution is denoted by $\lambda_{1} := \mathbb{P}(\tilde{x}_{t} = 0)$ and $\lambda_{2} := \mathbb{P}(\tilde{x}_{t} = 1)$.
Consequently, the 2-dimensional marginal distribution of $\{\tilde{x}_{t}\}$ is 
\begin{equation} \label{Sec4_2-1}
\begin{array}{llll}
    \mathbb{P} ((\tilde{x}_{t}, \tilde{x}_{t-1}) = (1,1)) = \lambda_{2}(1 - q),  &
    \mathbb{P} ((\tilde{x}_{t}, \tilde{x}_{t-1}) = (1,0)) = \lambda_{1}p, \\
    \mathbb{P} ((\tilde{x}_{t}, \tilde{x}_{t-1}) = (0,1)) = \lambda_{2}q,        &
    \mathbb{P} ((\tilde{x}_{t}, \tilde{x}_{t-1}) = (0,0)) = \lambda_{1}(1-p).
\end{array}\end{equation}

We consider the following missing pattern.
Without loss of generality, set $n_{1} = 0$. 
Suppose that before time $n_{1}$, there is a missing value every $k_{1}$ observations. 
Likewise, there is a missing value every $k_{2}$ observations after time $n_{1}$.
Specifically, $x_{t} = \bigstar$ if $-t$ is a positive integer multiple of $k_{1}$, or if $t$ is a positive integer multiple of $k_{2}$, where $k_{1}$ and $k_{2}$ are some integers larger than 2 and $k_1\neq k_2$ (see condition (C1) below).
Next, we parametrize an imputation $\{\hat{w}_{t}\}_{-\infty < t < \infty}$ as follows. 
Specifically, let
\begin{equation} \label{Sec4_2-2}
\begin{aligned}
    &\mathbb{P}(\hat{w}_{t}=1|(x_{t}, x_{t-1}) = (\bigstar, 1)) = a_{1}, 
    &\mathbb{P}(\hat{w}_{t}=1|(x_{t}, x_{t-1}) = (\bigstar, 0)) = b_{1},  
    &\mbox{ if } t < n_{1},
\end{aligned}
\end{equation}
\begin{equation} \label{Sec4_2-3}
\begin{aligned}
    &\mathbb{P}(\hat{w}_{t}=1|(x_{t}, x_{t-1}) = (\bigstar, 1)) = a_{2}, 
    &\mathbb{P}(\hat{w}_{t}=1|(x_{t}, x_{t-1}) = (\bigstar, 0)) = b_{2},  
    & \mbox{ if } t \geq n_{1},
\end{aligned}
\end{equation}
for some $a_{1}, a_{2}, b_{1}, b_{2} \in [0,1]$.
Essentially, we parametrize all possible imputations by $a_{1}, a_{2}, b_{1}, b_{2}$, each corresponding to different 2-dimensional marginal distributions.
The correct imputation should calibrate $a_{i}$'s and $b_{i}$'s such that the resulting marginal distributions equal those of the underlying series, given in \eqref{Sec4_2-1}. 
It is not difficult to see that the correct imputation must choose
\begin{align} \label{Sec4_2-3.5}
    a_{1}^{*} = a_{2}^{*} = 1 - q, \quad
    b_{1}^{*} = b_{2}^{*} = p.    
\end{align}
Note that this calibration does not depend on the missing pattern $k_{1}, k_{2}$. 

Assuming the optimization is carried out perfectly\footnote{\spacingset{1}\footnotesize This is where we ignore finite sample complications and assume a zero Wasserstein loss is achieved. One may use, for example, the hamming distance 
as the ground metric for the optimal transport cost.}, TWI calibrates $a_{i}, b_{i}, i = 1, 2$, by equating the 2-dimensional marginal distributions of $\{\hat{w}_{t}\}_{-\infty<t<0}$ and $\{\hat{w}_{t}\}_{0\geq t<\infty}$.
Hence, $\{a_{1}, a_{2}, b_{1}, b_{2}\}$ satisfies
\begin{equation} \label{Sec4_2-4}
\left.
\begin{aligned}
    \frac{1}{k_{1}}a_{1} - \frac{1}{k_{2}}a_{2} =& (1 - q)(\frac{1}{k_{1}} - \frac{1}{k_{2}}), \\
    \frac{1}{k_{1}}b_{1} - \frac{1}{k_{2}}b_{2} =& p(\frac{1}{k_{1}} - \frac{1}{k_{2}}).
\end{aligned}
\right\}
\end{equation}
Derivation of \eqref{Sec4_2-4} is relegated to the Section S1 of the supplementary material.
Now we introduce two conditions.
\begin{description}
    \item[(C1)] Non-systematic missing pattern: $k_{1} \neq k_{2}$. 
    \item[(C2)] Interpolator stability: $a_{1} = a_{2}$, $b_{1} = b_{2}$.
\end{description}
Clearly, identification is not possible when $k_{1} = k_{2}$, in which case any imputation with $a_{1} = a_{2}$ and $b_{1} = b_{2}$ would satisfy \eqref{Sec4_2-4}, so (C1) is necessary for identification. Intuitively, (C1) precludes the special case of down-sampling, which is notoriously hard to be treated as a missing data problem.
However, even with (C1) the solutions to \eqref{Sec4_2-4} are not unique.
This lack of uniqueness is at the crux of the identification problem for time series in the distribution matching framework \citep{Muzellec2020, wang2025}.
There are many imputations whose marginal distributions before and after $n_{1}$ are the same, but they may not equal to the one given in \eqref{Sec4_2-1}.
Inspired by \eqref{Sec4_2-3.5}, (C2) imposes an additional requirement for the ideal imputation, which can be verified without knowledge about the missing data.
It requires that, given the same neighboring observations, one should impute the missing entries in the same way (at least probabilistically) before and after $n_{1}$.
Together, (C1) and (C2) imply a unique solution to the system \eqref{Sec4_2-4}, which guarantees identification. 

\begin{prop} \label{prop:Markov}
    Under (C1) and (C2), the only solution to \eqref{Sec4_2-4} is $a_{1} = a_{2} = a_{1}^{*}$, $b_{1} = b_{2} = b_{1}^{*}$.
    That is, the \textbf{only} 2-dimensional marginal distribution that achieves zero Wasserstein loss and satisfies (C2) is that of the underlying distribution.
\end{prop}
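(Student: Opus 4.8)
The plan is to collapse everything to the small linear system \eqref{Sec4_2-4} and show that the degeneracy (C2) introduces is exactly what (C1) can undo. First I would recall, from the derivation preceding the statement (the one relegated to Section S1), that an imputation of the form \eqref{Sec4_2-2}--\eqref{Sec4_2-3} achieves zero Wasserstein loss if and only if the $2$-dimensional empirical marginals of $\{\hat{w}_{t}\}_{t<n_{1}}$ and $\{\hat{w}_{t}\}_{t\geq n_{1}}$ coincide, and that this marginal-matching condition is precisely the pair of equations \eqref{Sec4_2-4}. Given this, it suffices to prove that \eqref{Sec4_2-4} together with (C1) and (C2) forces $a_{1}=a_{2}=1-q$ and $b_{1}=b_{2}=p$, and then to invoke \eqref{Sec4_2-3.5} to recognize this as the correct imputation whose $2$-dimensional marginal is \eqref{Sec4_2-1}.

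Next I would substitute (C2), writing $a:=a_{1}=a_{2}$ and $b:=b_{1}=b_{2}$, into \eqref{Sec4_2-4}. The first equation then collapses to $a\bigl(\tfrac{1}{k_{1}}-\tfrac{1}{k_{2}}\bigr)=(1-q)\bigl(\tfrac{1}{k_{1}}-\tfrac{1}{k_{2}}\bigr)$ and the second to $b\bigl(\tfrac{1}{k_{1}}-\tfrac{1}{k_{2}}\bigr)=p\bigl(\tfrac{1}{k_{1}}-\tfrac{1}{k_{2}}\bigr)$. By (C1) we have $k_{1}\neq k_{2}$, hence $\tfrac{1}{k_{1}}-\tfrac{1}{k_{2}}\neq 0$, and dividing through gives $a=1-q$ and $b=p$. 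Comparing with \eqref{Sec4_2-3.5} this is $(a,b)=(a_{1}^{*},b_{1}^{*})$, so the only parameter vector $(a_{1},a_{2},b_{1},b_{2})$ compatible with (C1) and (C2) is the one yielding the true marginal \eqref{Sec4_2-1}. The final bookkeeping step is to translate ``unique parameter solution'' into ``unique marginal'': since, as noted in the text, distinct admissible parameter vectors induce distinct $2$-dimensional marginals of the imputed series, uniqueness of the solution to \eqref{Sec4_2-4} under (C1)--(C2) transfers directly to uniqueness of the marginal, completing the argument.

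There is essentially no genuine obstacle in the proposition itself once the framework is set up: the argument is just the observation that (C2) makes the coefficient matrix of \eqref{Sec4_2-4} proportional to $\bigl(\tfrac{1}{k_{1}}-\tfrac{1}{k_{2}}\bigr)$ times the identity, which is invertible precisely by (C1). The only non-mechanical point is the one I would handle up front by citation to the supplement, namely verifying that vanishing Wasserstein loss is equivalent to the marginal-matching system \eqref{Sec4_2-4} and that the parametrization \eqref{Sec4_2-2}--\eqref{Sec4_2-3} exhausts the $2$-dimensional marginals of interest; after that, the proposition is a one-line consequence.
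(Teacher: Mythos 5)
Your proposal is correct and follows essentially the same route as the paper: the paper's supplement derives the marginal-matching system \eqref{Sec4_2-4}, and the proposition itself is then exactly the observation you make — under (C2) both equations reduce to $a\bigl(\tfrac{1}{k_{1}}-\tfrac{1}{k_{2}}\bigr)=(1-q)\bigl(\tfrac{1}{k_{1}}-\tfrac{1}{k_{2}}\bigr)$ and $b\bigl(\tfrac{1}{k_{1}}-\tfrac{1}{k_{2}}\bigr)=p\bigl(\tfrac{1}{k_{1}}-\tfrac{1}{k_{2}}\bigr)$, and (C1) makes the common factor nonzero, forcing $a=1-q=a_{1}^{*}$ and $b=p=b_{1}^{*}$. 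Your framing of (C2) as collapsing the coefficient matrix to a scalar multiple of the identity, invertible precisely by (C1), is a clean way to state what the paper leaves implicit.
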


As shown in Proposition \ref{prop:Markov}, it is not necessary to use an $n_{1}$ such that all missing data lie on the same side.
In practice, appropriate tuning of $n_{1}$ and $p$ is needed. 
We can select $n_{1}$ from a pre-specified interval $[t_{1}, t_{2}]$ which yields the smallest Wasserstein loss, where $t_{1}$ and $n-t_{2}$ are sufficiently large.
Alternatively, a simple procedure to promote interpolator stability is to run TWI with different splitting cut-offs, using previously obtained imputations as initializations. This procedure, which we call $k$-TWI, is summarized in Algorithm \ref{alg:kWI}. 
Finally, similar to other nonparametric methods, $p$ can be chosen to diverge with the sample size so long as $p/n \rightarrow 0$.


\spacingset{1}
\begin{algorithm}[t]
\DontPrintSemicolon
    \KwInput{Cut-off points $n_{1}, n_{2}, \ldots, n_{k}$, number of lags $p$}
    \KwInit{$\hat{\mathbf{w}}^{(0)} \in \mathcal{C}$}
    \For{$h = 1,2, \ldots, k$}{
        Run Algorithm \ref{alg:WI} with initialization $\hat{\mathbf{w}}^{(0)}$ and cut-off $n_{h}$ to obtain $\hat{\mathbf{w}}$ \\
        Set $\hat{\mathbf{w}}^{(0)} = \hat{\mathbf{w}}$
    }    
    \KwOutput{Imputed time series $\hat{\mathbf{w}}$}
\caption{$k$-TWI}
\label{alg:kWI}
\end{algorithm}
\spacingset{2}

\section{Simulation Studies} \label{Sec::Simulation}
In this section, we apply TWI to data generated from various time series models, including multivariate, nonlinear, and nonstationary ones, to examine its finite sample performance.
For comparison, we employ Kalman smoothing, the iterative gap filling by SSA \citep{Kondrashov2006}, the scalar filtering algorithm \citep{pena2021}, and linear interpolations. 
The following two missing patterns are considered: 
\begin{itemize}
    \item[] \textbf{Pattern I}: 300 observations are omitted at random;
    \item[] \textbf{Pattern II}: For every 20 observations, 6 consecutive observations are omitted.
\end{itemize}
Pattern I simulates data missing at random while Pattern II generates patches of missing data within the same positions of every 20 observations.
The total length of the time series is set to $n = 1000$. Thus, for both missing patterns, 30\% of the data are missing. 
For each model and missing pattern considered, 1000 Monte Carlo simulations are carried out.

The proximal gradient descent \citep{parikh2014proximal} is used to efficiently solve step (b) in Algorithm \ref{alg:WI} when no side information is available because the projection step is straightforward. 
When $\mathcal{C}$ is defined by a system of linear equations, we solve step (b) using the closed form \eqref{Sec3_closedform}.
The \texttt{POT} package \citep{pot} in \texttt{Python} is used to solve the discrete optimal transport problem.
Throughout we fix $n_{1} = 0.4n$ and $p = 6$ for TWI. 
In addition, we apply the $k$-TWI (Algorithm \ref{alg:kWI}) with three cut-off points $n_{1} \in \{0.25n, 0.5n, 0.75n\}$. 
For the initializations of TWI, both linear interpolation and Kalman smoothing are used in our experiments. 
Kalman smoothing and gap filling with iterative SSA (iSSA henceforth) are implemented, respectively, via the \texttt{imputeTS} \citep{imputets} and \texttt{Rssa} \citep{golyandina2018} packages in \texttt{R}.
As discussed in Section \ref{Subsec::relatedworks}, 
the window length $L$ for which iSSA can be implemented is restricted. 
Here we use $L = 6$ for iSSA. 
In the reconstruction stage for iSSA, the first three eigentriples are used for grouping with the linear interpolation as initialization. 
The scalar filter algorithm (denoted as ScalarF henceforth) utilizes intervention analysis, which is detailed 
in Section S2 of the supplementary material.
Codes for the numerical experiments in this and the next section can be found in the author's Github page: \url{https://github.com/shuochieh/Wasserstein_imputation}.

\subsection{Linear and nonlinear univariate time series}

We first consider the following data-generating processes (DGPs). 
In the sequel, $\{\epsilon_{t}\}$ and $\{\epsilon_{t,j}\}$, for $j = 1,2,\ldots$, denote i.i.d.~sequences of standard Gaussian variables. 

\begin{model}[AR] \label{Sec5_modelAR}
    $x_{t} = \phi x_{t-1} + \epsilon_{t}$, where $\phi$ is set to 0.8.
\end{model}

\begin{model}[ARMA] \label{Sec5_modelARMA}
    $(1 - \phi_{1}B)x_{t} = (1 + \phi_{2}B)\epsilon_{t}$, where $B$ is the backshift operator and $(\phi_{1}, \phi_{2}) = (0.8, -0.6)$. 
\end{model}

\begin{model}[TAR; \citealp{tong1983}] \label{Sec5_modelTAR} 
If $x_{t-1} \leq \tau$, then $x_{t} = \phi_{1} x_{t-1} + \epsilon_{t}$. Otherwise, $x_{t} = \phi_{2} x_{t-1} + 0.5\epsilon_{t}$.
    The parameters $(\phi_{1}, \phi_{2}, \tau) = (-2, 0.7, 1)$.
\end{model}

\begin{model}[I(1) process] \label{Sec5_modelI1}
    Data $\{x_{t}\}$ are generated according to $x_{t} = x_{t-1} + z_{t} + \epsilon_{t,2}$, where $z_{t} = \phi z_{t-1} + 0.5\epsilon_{t,1}$ with $\phi = -0.7$.
\end{model}

\begin{model}[Cyclic series with noise (CYC)] \label{Sec5_modelCYC}
        $x_{t} = 10\cos(t(0.23\pi)) + 6 \cos(t(0.17\pi)) + 0.5\epsilon_{t}$.
\end{model}

DGPs \ref{Sec5_modelAR} and \ref{Sec5_modelARMA} are well-known stochastic time series models.
Note that DGP \ref{Sec5_modelARMA} admits an AR($\infty$) representation, so any finite-dimensional marginal distribution 
is inadequate to fully characterize the distribution of the underlying process.
DGP \ref{Sec5_modelTAR} is a celebrated threshold AR model which has found many applications in modeling nonlinear dynamics in the econometrics and statistics literature \citep{Montgomery1998, Li2012, tsay2018}.
DGP \ref{Sec5_modelI1} generates an I(1) series which typically produces a highly persistent stochastic trend.
Finally, DGP \ref{Sec5_modelCYC} features a deterministic cyclic trend with non-standard periods.
Specifically, it includes two sinusoidal components with periods of 8.70 and 11.77, respectively. 

Three measures of performance are considered.
First, we report the Wasserstein distance between the empirical 3-dimensional marginal distributions of the imputed series and the full data. 
Specifically, let $\{\hat{w}_{t}^{(i)}\}$ and $\{\tilde{x}_{t}^{(i)}\}$ be the imputed series and the original series in simulation $i$, respectively, for $i=1,2,\ldots,1000$. We report
\begin{align} \label{Sec5_wassloss}
    \frac{1}{1000}\sum_{i=1}^{1000}\mathcal{W}_{2} \left( \frac{1}{n-3} \sum_{t=2}^{n-1} \delta_{(\hat{w}_{t}^{(i)}, \hat{w}_{t-1}^{(i)}, \hat{w}_{t-2}^{(i)})}, \frac{1}{n-3} \sum_{t=2}^{n-1} \delta_{(\tilde{x}_{t}^{(i)}, \tilde{x}_{t-1}^{(i)}, \tilde{x}_{t-2}^{(i)})} \right),
\end{align}
as a measure of how well the imputed series approximates the original series.
Second, for DGP \ref{Sec5_modelAR}--\ref{Sec5_modelI1}, the model parameters are estimated using the imputed series, and the estimation root mean square errors (RMSEs) are reported. 
Third, we compare the RMSEs of the estimated autocovariance functions (ACFs). 
However, due to space constraints, the results for the ACFs are moved to Section S3.1 in the supplementary material.


The Wasserstein distance between the imputed and the original marginal distributions is reported in Table \ref{tab:Wassd_1}.
Across all five DGPs employed, the smallest loss is consistently achieved by TWI-type methods, indicating that TWI effectively approximates the distributional properties of the underlying series.
While other methods may yield good imputations for some specific DGPs, they are not consistent across all processes.
For instance, linear interpolations show low Wasserstein loss for DGP \ref{Sec5_modelAR} (AR), but it is less competitive when applied to other models. 
Similarly, Kalman smoothing and iSSA work well for DGP \ref{Sec5_modelCYC} (CYC), but both yield large Wasserstein loss for DGP \ref{Sec5_modelTAR} (TAR). 
In contrast, TWI methods, carefully initialized, consistently deliver solid performance across models and missing patterns. 
For the nonstationary Model \ref{Sec5_modelI1} (I(1)), Table \ref{tab:Wassd_1} shows the Wasserstein distance between the marginal distributions of the differenced series $\{\nabla \hat{w}_{t} \}$ and $\{ \nabla x_{t}\}$.
TWI again yields the lowest Wasserstein loss. 

Table \ref{tab:Wassd_1} also shows TWI often significantly improves the Wasserstein loss from its initialization. 
For example, for DGP \ref{Sec5_modelTAR} (TAR), TWI reduces the Wasserstein loss from Kalman smoothing initialization by approximately 45\% under missing pattern I (from 1.38 to 0.74).
Similarly, for DGP \ref{Sec5_modelCYC} (CYC), TWI reduces the Wasserstein loss by more than 50\% from linear initialization (from 1.96 to 0.79).  
Figure \ref{fig:TAR1_scatter} plots $\hat{w}_{t}$ against $\hat{w}_{t-1}$ when either or both are imputed under DGP \ref{Sec5_modelTAR}, alongside $(\tilde{x}_{t}, \tilde{x}_{t-1})$ of the original data. 
While Kalman smoothing introduces substantial distortion in the marginal distribution, TWI, initialized by Kalman smoothing, significantly corrects this bias, producing imputations that capture the nonlinearity in the data generating process.
Note that Kalman smoothing is based on an ARIMA state-space representation; hence the model is misspecified.
Nevertheless, it already serves as a good initialization for TWI.

\spacingset{1}
\begin{table}[t]
\centering
\caption{Wasserstein distance between the empirical marginal distributions of the imputed series and those of the full data, averaged over 1000 simulations. iSSA denotes gap filling by the iterative singular spectrum analysis. TWI$_\mathrm{lin}$ denotes temporal Wasserstein imputation using linear interpolation as initialization and TWI$_\mathrm{Kal}$ denotes temporal Wasserstein imputation with Kalman smoothing as initialization. Similar notations are used for $k$-TWI. For DGP \ref{Sec5_modelI1} (I(1)), the loss is computed using the first-differenced series. The smallest value for each model and missing pattern is marked in boldface.}
\begin{tabular}{@{}crrrrrrrr@{}}
\toprule
Model & Linear & iSSA & Kalman & ScalarF & TWI$_\mathrm{lin}$ & $k$-TWI$_\mathrm{lin}$ & TWI$_\mathrm{Kal}$ & $k$-TWI$_\mathrm{Kal}$        \\ \midrule
\multicolumn{9}{c}{{\bf Missing pattern I}} \\
AR      & 0.41 & {\bf 0.40} & 0.42 & 0.50 & {\bf 0.40} & 0.44 & 0.41 & 0.44 \\
ARMA    & 0.48 & 0.51 & 0.47 & 0.48 & 0.40 & {\bf 0.38} & 0.40 & {\bf 0.38} \\
TAR     & 1.12 & 1.29 & 1.38 & 1.13 & 0.96 & 0.81 & 0.74 & {\bf 0.63} \\
I(1)    & 0.71 & 0.68 & 0.62 & 4.38 & 0.58 & 0.53 & 0.54 & {\bf 0.50} \\
CYC     & 1.96 & 0.61 & 0.77 & 1.97 & 0.79 & 0.77 & {\bf 0.60} & 0.70 \\
NLVAR   & 2.98 & 3.04 & 3.01 & 2.94 & 2.25 & 2.14 & 2.19 & {\bf 2.12} \\
AL {\footnotesize ($\times 10$)}   & 0.70 & 0.85 & 4.17 & 0.55 & 0.43 & 0.37 & 0.38 & {\bf 0.35}\smallskip \\
\multicolumn{9}{c}{{\bf Missing pattern II}} \\
AR      & 0.44 & 0.43 & 0.50 & 0.76 & {\bf 0.39} & 0.44 & 0.43 & 0.45 \\
ARMA    & 0.47 & 0.69 & 0.51 & 0.53 & 0.36 & {\bf 0.34} & 0.40 & 0.35 \\
TAR     & 1.04 & 1.58 & 1.25 & 0.99 & 0.84 & 0.73 & 0.76 & {\bf 0.61} \\
I(1)    & 0.67 & 0.60 & 0.60 & 8.35 & 0.51 & 0.45 & 0.49 & {\bf 0.44} \\
CYC     & 2.58 & 1.47 & 0.77 & 3.79 & 2.62 & 1.60 & {\bf 0.62} & 0.72 \\
NLVAR   & 3.00 & 3.17 & 3.33 & 3.67 & 2.13 & {\bf 2.02} & 2.31 & 2.04 \\
AL {\footnotesize ($\times 10$)}   & 0.56 & 1.04 & 3.81 & 0.46 & 0.36 & 0.33 & 0.40 & {\bf 0.34} \\
\bottomrule
\end{tabular}
\label{tab:Wassd_1}
\end{table}
\spacingset{2}

\spacingset{1}
\begin{figure}
    \centering
    \begin{subfigure}[t]{0.48\textwidth}
        \centering
        \includegraphics[width=\linewidth, height=0.3\textheight]{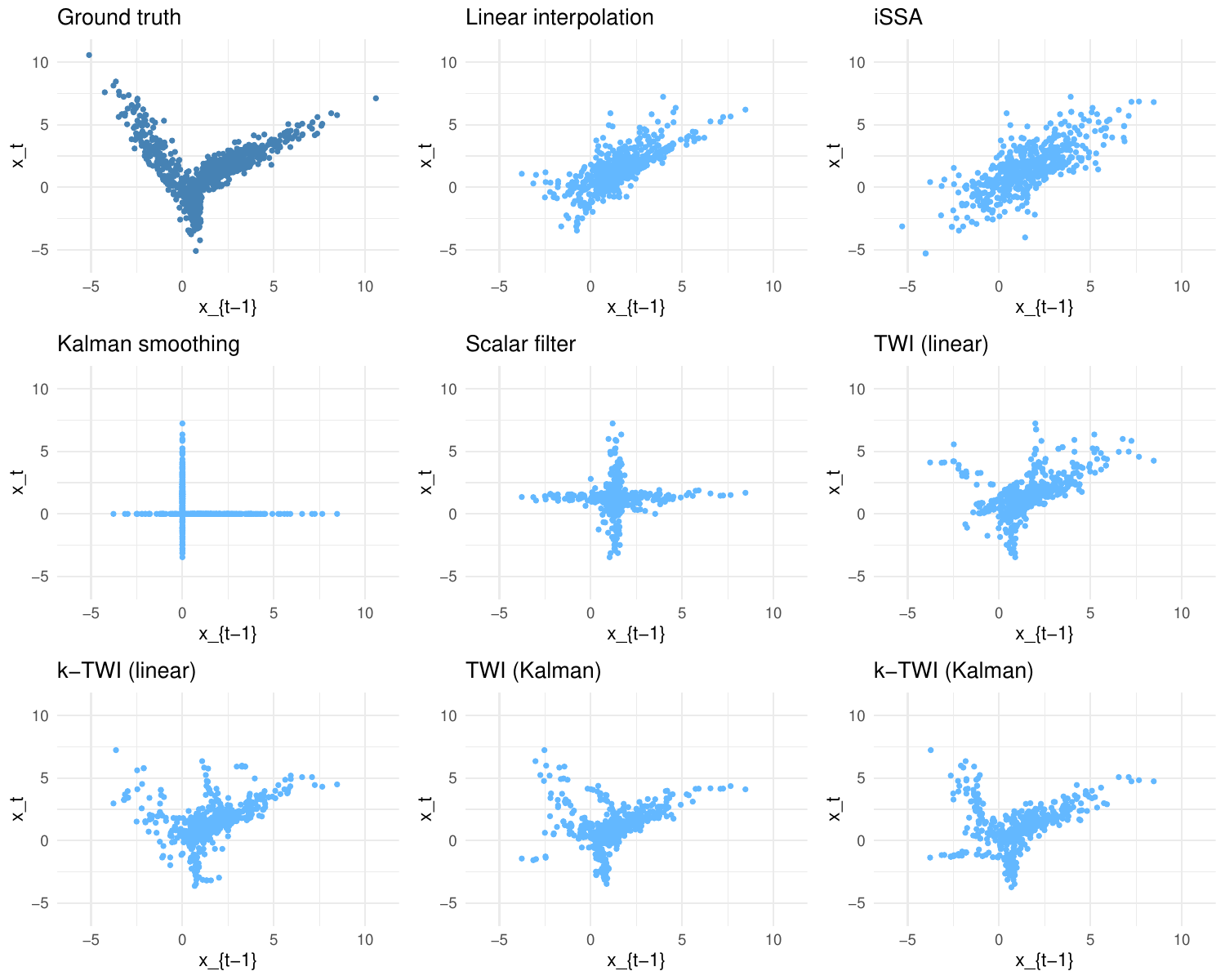}
        \caption{Under Missing pattern I}
    \end{subfigure} 
    \begin{subfigure}[t]{0.48\textwidth}
        \centering
        \includegraphics[width=\linewidth, height=0.3\textheight]{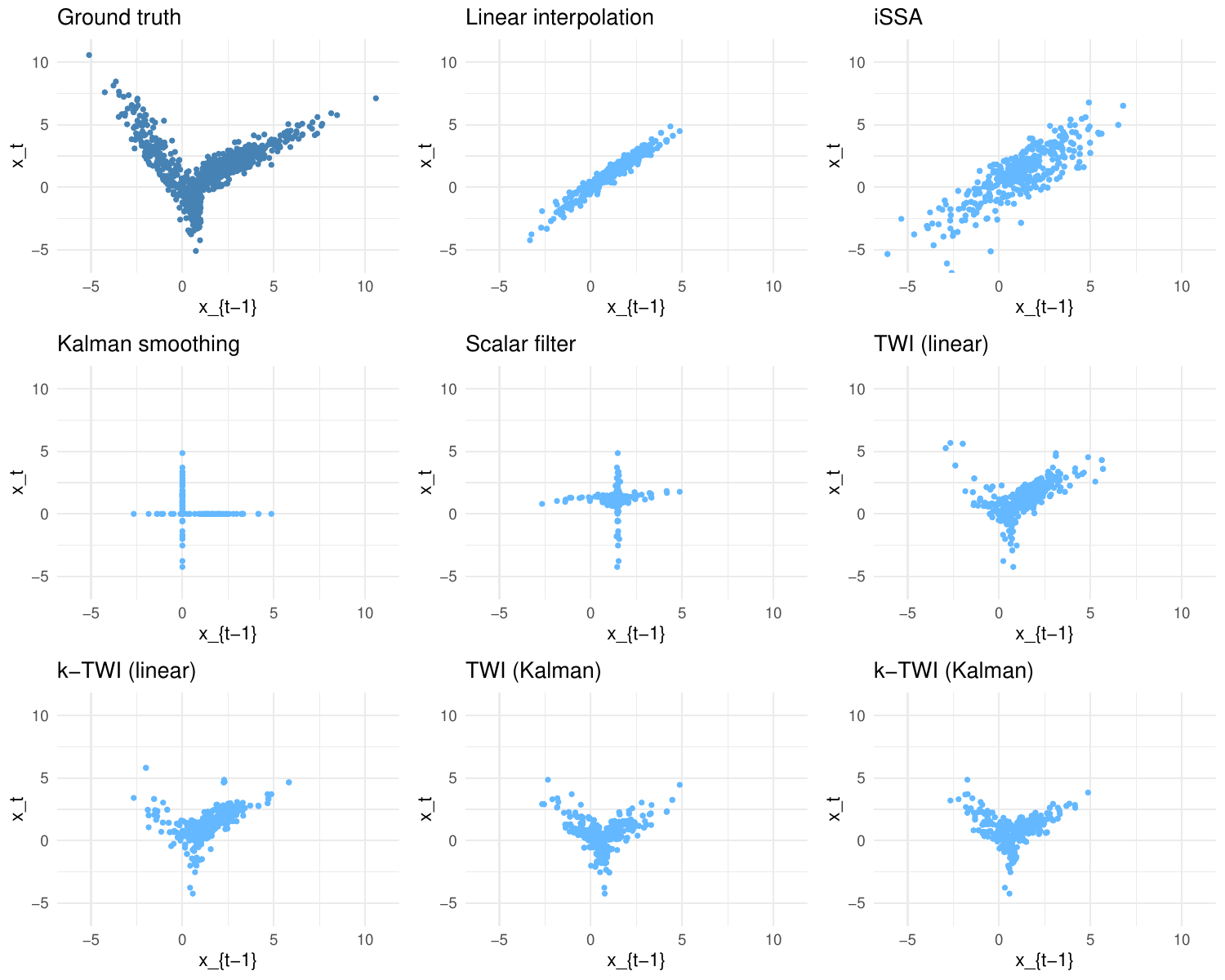}
        \caption{Under Missing pattern II}
    \end{subfigure} \\
    \caption{Scatter plots of $(x_{t-1}, x_{t})$ of the original data and $(\hat{w}_{t-1}, \hat{w}_{t})$ of the imputed series, when the data are generated from Model \ref{Sec5_modelTAR} (TAR).}
    \label{fig:TAR1_scatter}
\end{figure}
\spacingset{2}

Next, we turn to the model parameters estimated from the imputed series. Table \ref{tab:models_RMSE} presents the RMSEs for each model parameter.
For DGP \ref{Sec5_modelAR} (AR) and DGP \ref{Sec5_modelARMA} (ARMA), Kalman smoothing and the scalar filter yield satisfactory estimates among the benchmarks, with TWI offering only limited improvements. 
For the nonlinear DGP \ref{Sec5_modelTAR} (TAR), the results are drastically different.
TWI produces the most accurate parameter estimates and can achieve more than 50\% reduction in estimation errors compared to linear interpolation and Kalman smoothing.
These estimation results demonstrate TWI's strength in learning the underlying nonlinear dynamics. 
For DGP \ref{Sec5_modelI1} (I(1)), TWI either outperforms or stays on par with the best benchmark, with Kalman smoothing being a strong initialization.

\spacingset{1}
\begin{table}[h]
\centering
\caption{Estimation root mean squared errors for parameter estimation errors. The results are averaged over 1000 simulations. See Table \ref{tab:Wassd_1} for notations.}
\begin{tabular}{@{}cccccccccc@{}}
\toprule
Model & & Linear & iSSA & Kalman & ScalarF & TWI$_\mathrm{lin}$ & $k$-TWI$_\mathrm{lin}$ & TWI$_\mathrm{Kal}$ & $k$-TWI$_\mathrm{Kal}$        \\ \midrule
\multicolumn{10}{c}{\bf Missing pattern I} \\
AR & $\phi$ & 0.06 & 0.06 & 0.05 & {\bf 0.03} & {\bf 0.03} & 0.04 & {\bf 0.03} & 0.04\smallskip \\
ARMA & $\phi_{1}$ & 0.19 & 0.28 & 0.12 & {\bf 0.08} & 0.17 & 0.16 & 0.11 & 0.11 \\
     & $\phi_{2}$ & 0.46 & 0.63 & 0.18 & {\bf 0.10} & 0.33 & 0.26 & 0.16 & 0.15\smallskip \\
TAR  & $\phi_{1}$ & 1.02 & 1.41 & 0.82 & 0.63 & 0.71 & 0.58 & 0.53 & {\bf 0.44} \\
     & $\phi_{2}$ & 0.07 & 0.07 & 0.20 & 0.04 & 0.05 & 0.03 & 0.03 & {\bf 0.02} \\
     & $\tau$     & {\bf 0.01} & 0.26 & 0.35 & 0.20 & 0.05 & 0.05 & 0.03 & 0.03\smallskip \\
I(1) & $\phi$     & 0.96 & 1.03 & 0.45 & 0.68 & 0.48 & 0.32 & 0.30 & {\bf 0.24}\smallskip \\ 
NLVAR& $\phi_{11}$& 0.19 & 0.11 & 0.15 & {\bf 0.03} & 0.06 & {\bf 0.03} & 0.05 & {\bf 0.03} \\
     & $\phi_{12}$& 3.10 & 2.43 & 2.87 & 1.67 & 1.34 & 1.02 & 1.22 & {\bf 0.97} \\
     & $\phi_{21}$& {\bf 0.03} & 0.07 & {\bf 0.03} & {\bf 0.03} & {\bf 0.03} & {\bf 0.03} & {\bf 0.03} & {\bf 0.03} \\
     & $\phi_{22}$& 0.18 & 0.09 & 0.10 & {\bf 0.04} & 0.05 & {\bf 0.04} & {\bf 0.04} & {\bf 0.04}\smallskip \\
\multicolumn{10}{c}{\bf Missing pattern II} \\
AR & $\phi$ & 0.05 & 0.06 & 0.04 & 0.04 & 0.03 & {\bf 0.02} & 0.03 & {\bf 0.02}\smallskip \\
ARMA & $\phi_{1}$ & {\bf 0.03} & 0.11 & 0.09 & 0.08 & 0.05 & 0.06 & 0.09 & 0.09 \\
     & $\phi_{2}$ & 0.16 & 0.46 & 0.11 & 0.{\bf 10} & 0.12 & 0.11 & 0.11 & 0.12\smallskip \\
TAR  & $\phi_{1}$ & 0.72 & 1.66 & 0.38 & 0.25 & 0.25 & 0.20 & 0.29 & 0.21 \\
     & $\phi_{2}$ & 0.09 & 0.10 & 0.05 & 0.03 & 0.06 & 0.05 & 0.02 & {\bf 0.01} \\
     & $\tau$     & {\bf 0.01} & 0.34 & {\bf 0.01} & {\bf 0.01} & {\bf 0.01} & {\bf 0.01} & {\bf 0.01} & {\bf 0.01}\smallskip \\
I(1) & $\phi$     & {\bf 0.08} & 0.53 & 0.11 & 0.70 & 0.09 & 0.10 & 0.10 & 0.10\smallskip \\
NLVAR& $\phi_{11}$& 0.19 & 0.04 & 0.08 & 0.02 & 0.02 & {\bf 0.01} & 0.02 & {\bf 0.01} \\
     & $\phi_{12}$& 2.67 & 1.49 & 1.68 & 1.26 & 0.61 & 0.46 & 0.61 & {\bf 0.43} \\
     & $\phi_{21}$& {\bf 0.02} & 0.09 & 0.03 & 0.03 & 0.06 & 0.05 & 0.04 & 0.03 \\
     & $\phi_{22}$& 0.15 & 0.11 & 0.05 & {\bf 0.04} & {\bf 0.04} & {\bf 0.04} & {\bf 0.04} & {\bf 0.04} \\
\bottomrule
\end{tabular}
\label{tab:models_RMSE}
\end{table}
\spacingset{2}

\subsection{Multivariate time series}
We consider next models for multivariate nonlinear time series.

\begin{model}[Nonlinear VAR; NLVAR] \label{Sec5_modelNLVAR}
Let $\mathfrak{s}(z) = 1 / (1 + \exp(-z)) - 0.5$ be the centered sigmoid function. The data are generated as 
    \begin{align*}
        x_{t,1} = \phi_{11} x_{t-1,1} + \phi_{12} \mathfrak{s}(3x_{t-1,2}) + 0.25 \epsilon_{t,1}, \quad
        x_{t,2} = \phi_{21} x_{t-1,1} + \phi_{22} x_{t-1,2} + 3\epsilon_{t,2},
    \end{align*}
    where $(\phi_{11}, \phi_{12}, \phi_{21}, \phi_{22}) = (0.3, 8, 0, 0.4)$. 
\end{model}

DGP \ref{Sec5_modelNLVAR} is a VAR model with a nonlinear term.
In particular, the impact of $x_{t-1,2}$ on $x_{t,1}$ is nonlinear and is modeled by a sigmoid function.
As depicted in Figure \ref{fig:NLVAR_scatter}, TWI successfully captures the underlying nonlinear dynamics, whereas other imputation methods fail to preserve meaningful relationships.
This is validated by the results shown in Table \ref{tab:Wassd_1}, where TWI achieves the lowest Wasserstein distance.
Moreover, as shown in Table \ref{tab:models_RMSE}, TWI provides considerable error reduction in the parameter associated with the nonlinear effect, $\phi_{12}$, compared to the benchmarks.
It attains more than 50\% reduction in RMSE compared to the scalar filter, the best-performing benchmark.

We also consider a challenging multivariate nonlinear model, which generates the so-called compositional time series.
The data are generated from the additive logistic (AL) model of \cite{brunsdon1998}.
Due to space constraints, we present the simulation results and discussion in Section S3.2 of the supplementary material.

\spacingset{1}
\begin{figure}[h!]
    \centering
    \begin{subfigure}[t]{0.49\textwidth}
        \includegraphics[width=\linewidth]{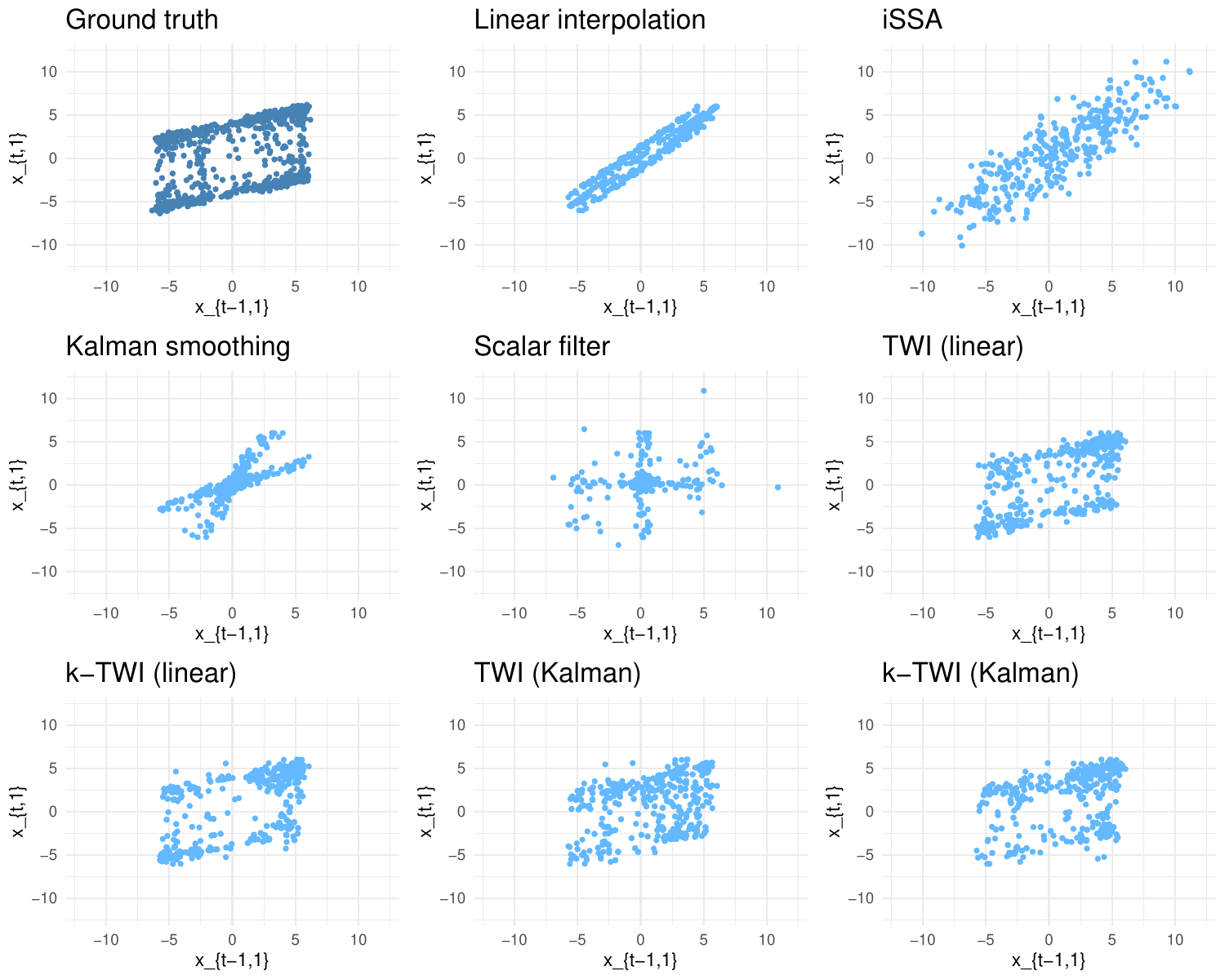}
        \caption{$\hat{w}_{t,1}$ on $\hat{w}_{t-1, 1}$}
    \end{subfigure}
    \begin{subfigure}[t]{0.49\textwidth}
        \includegraphics[width=\linewidth]{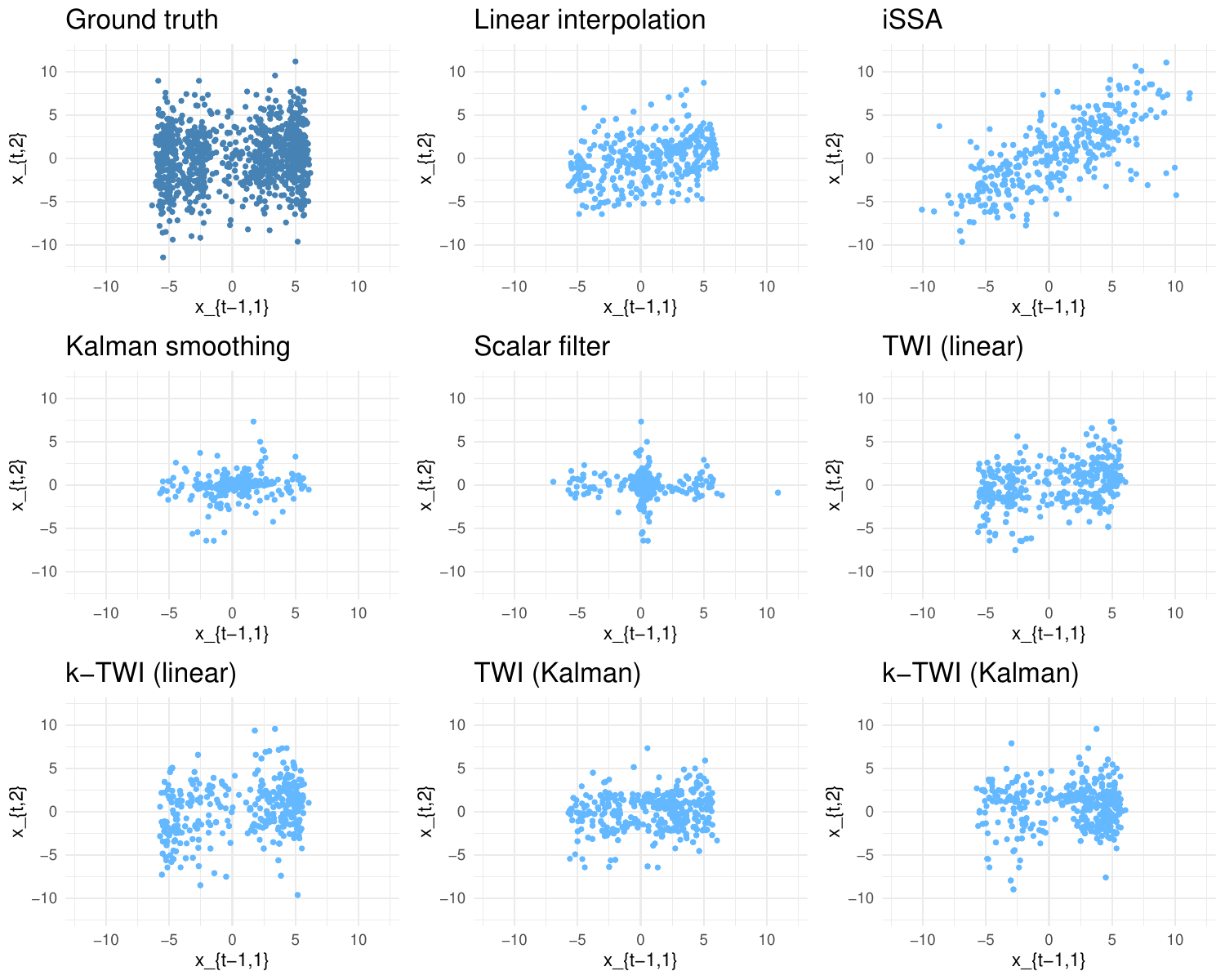}
        \caption{$\hat{w}_{t,2}$ on $\hat{w}_{t-1, 1}$}
    \end{subfigure} \\
    \begin{subfigure}[t]{0.49\textwidth}
        \includegraphics[width=\linewidth]{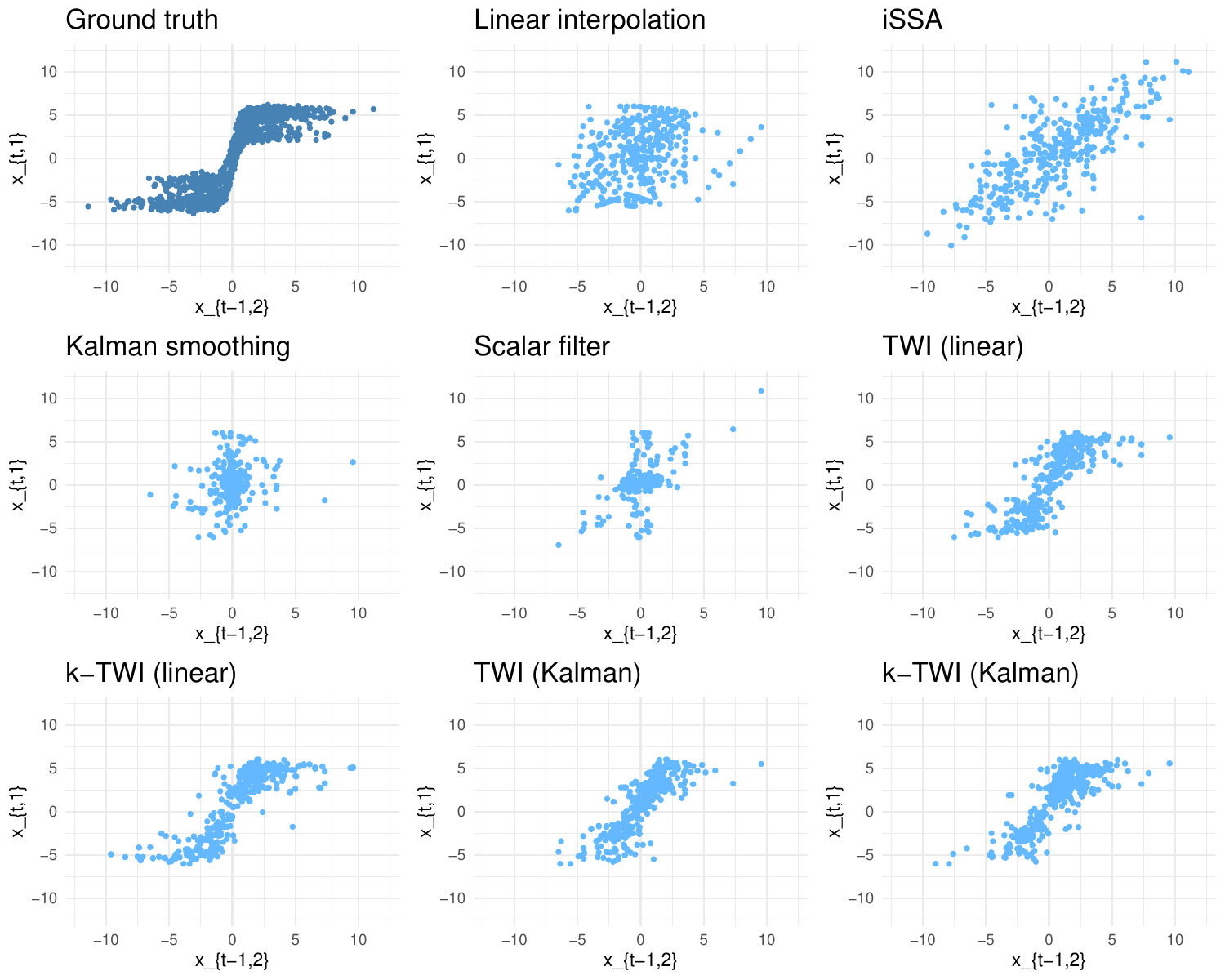}
        \caption{$\hat{w}_{t,1}$ on $\hat{w}_{t-1, 2}$}
    \end{subfigure}
    \begin{subfigure}[t]{0.49\textwidth}
        \includegraphics[width=\linewidth]{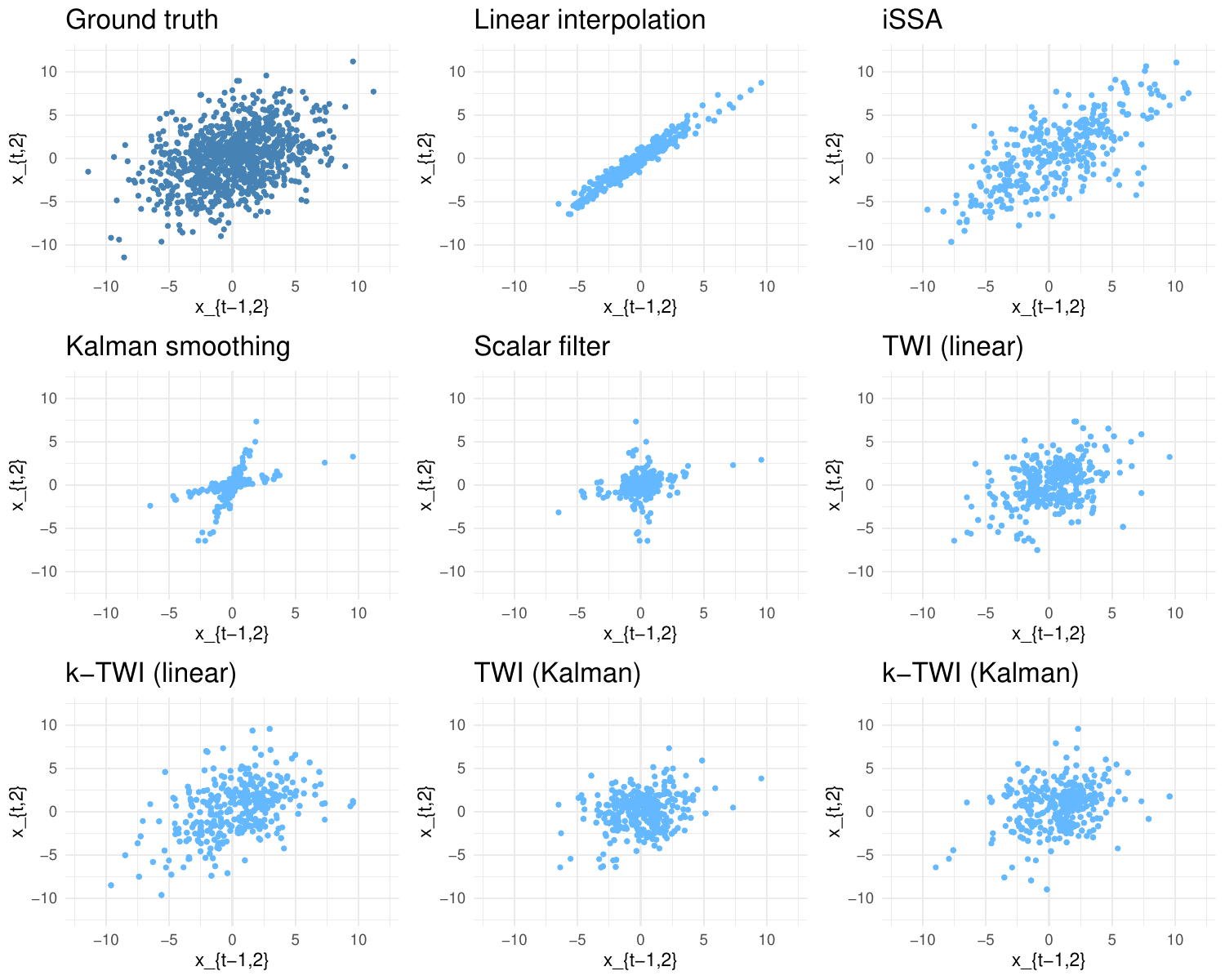}
        \caption{$\hat{w}_{t,2}$ on $\hat{w}_{t-1, 2}$}
    \end{subfigure}
    \caption{Scatter plots of $(x_{t-1,i}, x_{t,j})$ of the original data and $(\hat{w}_{t-1,i}, \hat{w}_{t,j})$ of the imputed series for $i,j \in \{1,2\}$, when the data are generated from Model \ref{Sec5_modelNLVAR} under missing pattern II.}
    \label{fig:NLVAR_scatter}
\end{figure}
\spacingset{2}

\section{Groundwater data application} \label{Sec::realdata}

In this section, we apply the proposed TWI to a Taiwan groundwater dataset, which was collected and analyzed by \citet{Hsu2020}.
Because of the island's steep topography and uneven rainfall distribution, groundwater management is of considerable importance, particularly as it is a critical water source.
However, the statistical analysis of groundwater levels is plagued by frequent missing data, especially from deeper wells. 

To study the seasonal groundwater variations, we first transform the hourly data between October 1992 and August 2020 into monthly records. 
For each site, monthly groundwater level is calculated as the average in a given month, with months classified as missing if more than one-third of the observations are missing.
There are many outlying values in the series, including abrupt changes in groundwater levels that are more than 5 standard deviations away from the sample mean.
We also treat such extreme values as missing data.
For demonstration, our analysis focuses on the sites with less than 10 missing values, reducing the sample to 15 sites.
As shown in Figure \ref{fig:GW_timeplot}, these series exhibit seasonality and structural changes, which are quite representative of the dataset and are common real-world features.

\spacingset{1}
\begin{figure}[t]
    \centering
    \includegraphics[width=\textwidth, height = 0.35\textheight]{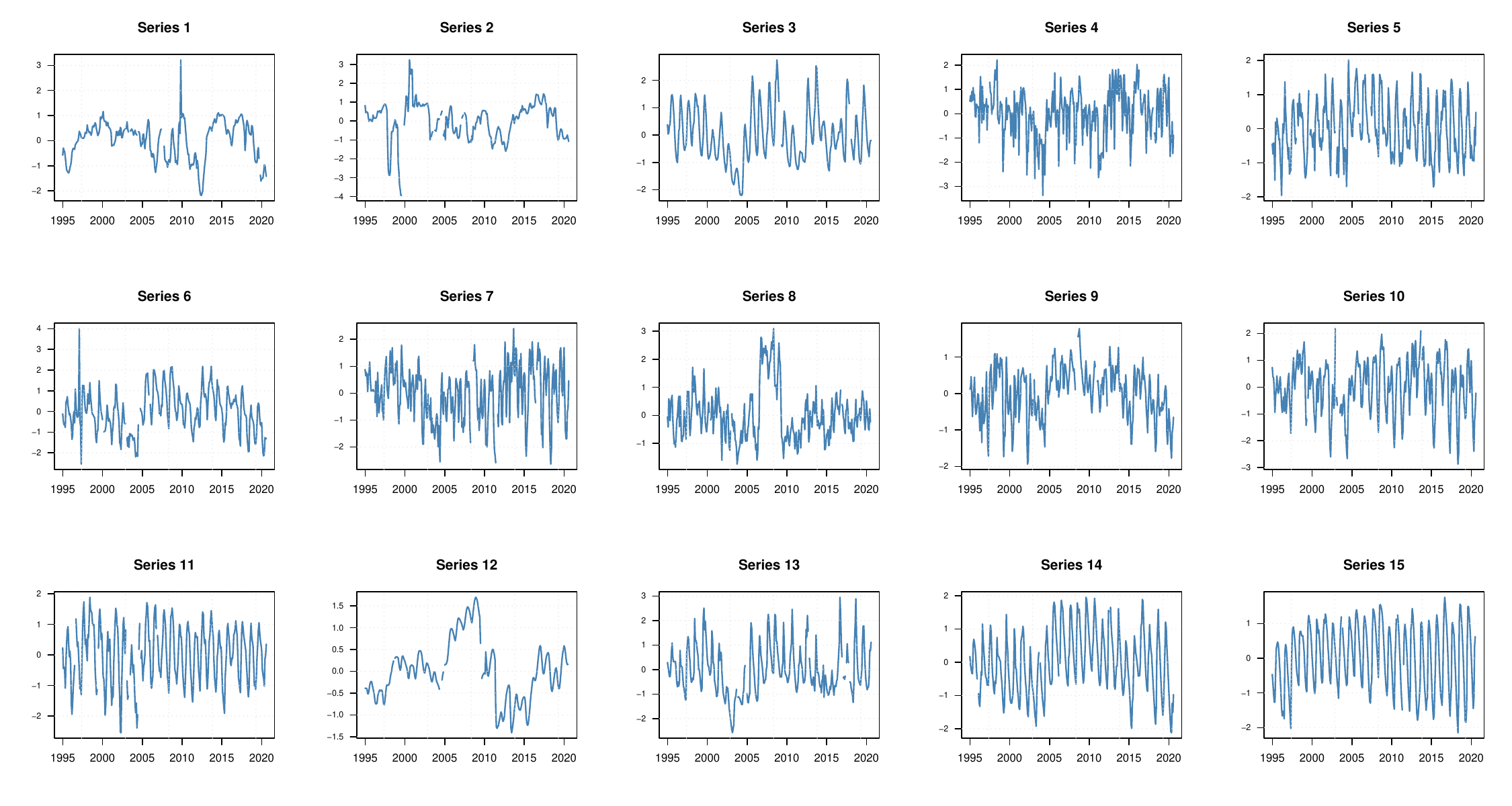}
    \caption{Time plots of the groundwater levels of the 15 sites under study. Each series is normalized by subtracting its mean and dividing by its standard deviation.}
    \label{fig:GW_timeplot}
\end{figure}
\spacingset{2}

In the following, we randomly omit $\lfloor \rho n \rfloor$ observations to create missing data, where $\rho \in \{0.1, 0.2, 0.3, 0.4, 0.5, 0.6\}$ is the missing ratio, $n = 308$ is the number of observations (including missing values) in the period under study, and $\lfloor x \rfloor$ is the largest integer $\leq x$ for $x > 0$.
However, the last 36 observations of each series are left intact because otherwise, the SSA could not be implemented due to lack of intact columns in the trajectory matrix. 
Since TWI performs best with stationary series, we first use kernel smoothing (estimated with missing data) to remove the trends, and then apply ($k$-)TWI to the residuals.
In addition to the methods employed in Section \ref{Sec::Simulation}, we also compare with imputing the missing values using the sample mean. 
Due to space constraints, the implementation details of the methods are provided in Section S3.3 of the supplementary material.




We consider two downstream tasks.
First, the imputed series are used to estimate some summary statistics, including the autocovariances, autocorrelations, and partial autocorrelations. 
Specifically, let $\gamma_{j}(h)$ be the statistic at the $h$-th lag obtained using the $j$-th series (without the artificial missing data), and $\hat{\gamma}_{j}(h)$ be the corresponding estimates from the imputed series. 
Table \ref{Realdata:tab1} reports the mean absolute error (MAE), $\frac{1}{45}\sum_{j=1}^{15}\sum_{h=1}^{3}|\hat{\gamma}_{j}(h)-\gamma_{j}(h)|$, for the autocovariances. 
When the missing ratio is mild, for instance $\rho \leq 0.3$, iSSA, Kalman smoothing, and linear interpolation perform reasonable well, while TWI and $k$-TWI are quite close to the best-performing methods.
Mean imputation significantly distorts the estimates. 
However, when $\rho > 0.3$, TWI and $k$-TWI show some advantages.
This is expected because when the missing ratio is high, accurate estimation of the missing data is difficult, but distribution matching is still possible.
Table \ref{Realdata:tab2} shows the MAE for estimating the autocorrelations.
For $\rho \leq 0.2$, most methods yield similar estimates. 
When $\rho > 0.2$, $k$-TWI outperforms the competing methods, among which the Kalman smoothing is a strong contender.
Meanwhile, TWI is somewhat close to Kalman smoothing for $\rho \leq 0.4$. 
Linearly interpolated series, however, has autocorrelations quite different from the original series.
Finally, Table \ref{Realdata:tab3} shows the MAE for the partial autocorrelations.
For $\rho \geq 0.3$, TWI and $k$-TWI consistently show good performance. 
The MAE of Kalman smoothing is similar to the proposed methods only when $\rho \leq 0.4$.
Unlike in previous cases, the partial autocorrelations of the iSSA imputations are inaccurate, showing that it often neglects stochastic dynamic relationships.

\spacingset{1}
\begin{table}[t]
\centering
\caption{Mean absolute errors ($\times 10$) for estimating autocovariances. Mean stands for imputation by the sample mean. For each missing ratio $\rho$, the smallest two values are in boldface.}
\label{Realdata:tab1}
\begin{tabular}{@{}lccccccc@{}}
\toprule
$\rho$ & Linear & iSSA & Kalman & ScalarF & Mean & TWI & $k$-TWI \\ \midrule
0.1 & {\bf 0.14} & 0.16 & {\bf 0.14} & 0.31 & 0.98 & 0.16 & 0.18 \\
0.2 & {\bf 0.21} & {\bf 0.20} & 0.38 & 0.72 & 1.91 & 0.27 & 0.34 \\
0.3 & 0.41 & {\bf 0.31} & {\bf 0.30} & 1.01 & 2.46 & 0.35 & 0.33 \\
0.4 & 0.50 & 0.49 & {\bf 0.43} & 1.58 & 3.10 & {\bf 0.46} & {\bf 0.46} \\
0.5 & 0.77 & 0.63 & 0.60 & 2.35 & 3.54 & {\bf 0.56} & {\bf 0.52} \\
0.6 & 0.98 & {\bf 0.91} & 1.25 & 2.90 & 3.84 & 0.99 & {\bf 0.79} \\ \bottomrule
\end{tabular}
\end{table}
\spacingset{2}

\spacingset{1}
\begin{table}[t]
\centering
\caption{Mean absolute errors ($\times 10$) for autocorrelations. See Table \ref{Realdata:tab1} for notations.}
\label{Realdata:tab2}
\begin{tabular}{@{}lccccccc@{}}
\toprule
$\rho$ & Linear & iSSA & Kalman & ScalarF & Mean & TWI & $k$-TWI \\ \midrule
0.1 & 0.18 & 0.17 & {\bf 0.12} & 0.15 & 0.63 & {\bf 0.13} & 0.14 \\
0.2 & 0.27 & {\bf 0.25} & 0.28 & 0.35 & 1.37 & {\bf 0.23} & 0.26 \\
0.3 & 0.55 & 0.40 & 0.36 & 0.51 & 1.76 & {\bf 0.31} & {\bf 0.28} \\
0.4 & 0.73 & 0.47 & {\bf 0.42} & 0.72 & 2.37 & 0.45 & {\bf 0.38} \\
0.5 & 1.17 & 0.72 & {\bf 0.56} & 1.22 & 2.85 & 0.65 & {\bf 0.44} \\
0.6 & 1.52 & {\bf 0.92} & 1.09 & 1.81 & 3.17 & 1.23 & {\bf 0.76} \\ \bottomrule
\end{tabular}
\end{table}
\spacingset{2}

\spacingset{1}
\begin{table}[t]
\centering
\caption{Mean absolute errors ($\times 10$) for the partial autocorrelations. See Table \ref{Realdata:tab1} for notations.}
\label{Realdata:tab3}
\begin{tabular}{@{}lccccccc@{}}
\toprule
$\rho$ & Linear & iSSA & Kalman & ScalarF & Mean & TWI & $k$-TWI \\ \midrule
0.1 & {\bf 1.10} & 1.13 & {\bf 1.11} & 1.34 & 2.27 & 1.14 & 1.27 \\
0.2 & {\bf 1.09} & 1.26 & {\bf 1.21} & 1.72 & 2.95 & 1.39 & 1.47 \\
0.3 & {\bf 1.26} & 1.60 & 1.50 & 1.91 & 3.19 & {\bf 1.34} & {\bf 1.34} \\
0.4 & {\bf 1.38} & 2.00 & 1.54 & 2.01 & 3.60 & {\bf 1.51} & 1.55 \\
0.5 & 1.48 & 1.76 & 1.88 & 2.24 & 3.81 & {\bf 1.47} & {\bf 1.46} \\
0.6 & 1.64 & 2.20 & 2.21 & 2.85 & 4.07 & {\bf 1.60} & {\bf 1.58} \\ \bottomrule
\end{tabular}
\end{table}
\spacingset{2}

Second, we use the first 200 observations from the imputed series to estimate an AR(15) model, where the estimated coefficients are then used in predicting, in a one-step-ahead rolling window fashion, the rest 108 observations from the true data.
Figure \ref{Realdata:fig1} plots the root mean squared prediction errors. 
For most series except Series 1 and 2, TWI and $k$-TWI are among the methods yielding the lowest prediction errors. 
Their advantage is again more pronounced when the missing ratio is large.
In addition, the TWI-type methods are stable as $\rho$ increases, whereas the iSSA and Kalman smoothing may suffer from large errors. 

In summary, the above analysis shows that the proposed TWI and $k$-TWI methods, through distribution matching, are useful in imputing missing values of time series for downstream statistical analysis. 
Compared to the existing nonparametric iSSA method and the parametric Kalman smoothing, such advantage is especially valuable when the missing ratio is high.

\spacingset{1}
\begin{figure}[t]
    \centering
    \includegraphics[width=\linewidth]{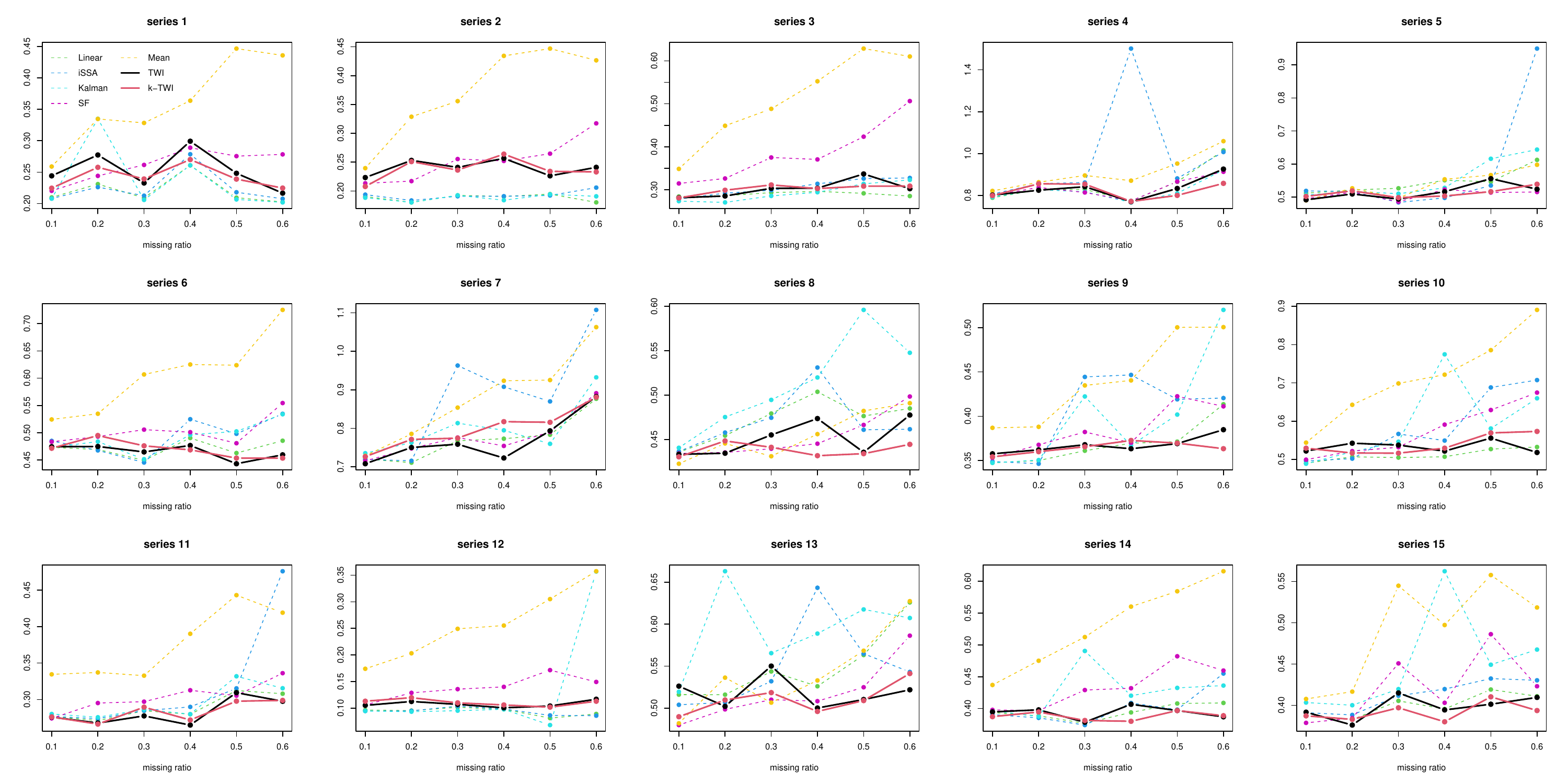}
    \caption{Root mean squared prediction errors for each series. The AR(15) model is trained using the first 200 observations of the imputed series, which is then tested on the rest 108 data from the uncontaminated true series (excluding missing values).}
    \label{Realdata:fig1}
\end{figure}
\spacingset{2}

\section{Conclusion}
This paper introduced a temporal Wasserstein imputation, a novel nonparametric method for imputing missing data in time series.
Compared to existing approaches, TWI (1) does not require any model specification prior to imputation; (2) can be carried out in a principled manner, and (3) yields imputations whose distributional properties are more aligned with those of the underlying process, making downstream analyses more reliable. 
In simulation study and real data analysis, TWI has shown its versatility and contributions to the existing literature, especially for the cases of high missing ratio and of imputing stochastic, nonlinear time series.

\bigskip
\begin{center}
{\large\bf SUPPLEMENTARY MATERIAL}
\end{center}

\begin{description}

\item[Supplementary material:] The file contains the proofs for the results presented in Section \ref{Sec::theory} and some technical results. It also documents the scalar filter algorithm of \citet{pena2021} and complementary numerical results, including the simulation results for autocovariance estimation and for a compositional time series DGP.
Finally, it includes the implementation details of the groundwater data analysis.

\end{description}

\bibliographystyle{apalike}
\bibliography{thebib}

\newpage

\spacingset{1.75} 

\setcounter{page}{1}
\setcounter{section}{0}
\setcounter{equation}{0}
\def\theequation{S\arabic{section}.\arabic{equation}}
\def\thetable{S\arabic{table}}
\def\thesection{S\arabic{section}}
\setcounter{table}{0}
\setcounter{figure}{0}
\renewcommand\thetheorem{S\arabic{theorem}}
\renewcommand\thelemma{S\arabic{lemma}}
\renewcommand\thefigure{S\arabic{figure}}

\begin{center}
    {\Large  \bf Supplementary material to ``Temporal Wasserstein Imputation: A Versatile Method for Time Series Imputation''} \\
    Shuo-Chieh Huang, Tengyuan Liang, and Ruey S. Tsay
\end{center}

This supplementary material contains three sections.
Section S1 collects the proofs of the propositions presented in main text as well as some technical results.
Section S2 documents the scalar filter algorithm of \citet{pena2021} for completeness.
Section S3 contains complementary numerical results, including the simulation results for autocovariance estimation and imputing data generated from a compositional time series model.
It also includes the details of implementation for the methods employed in the groundwater data analysis.

\section{Technical proofs} \label{App::Proof}

\begin{proof}[Proof of Proposition \textcolor{red}{3}]
    Clearly, (a) holds by construction of the algorithm. 
    Since Algorithm \textcolor{red}{1} 
    is an instance of the Gauss-Siedel algorithm, (b) follows from \cite{GRIPPO2000127}.
    Thus it remains to prove (c). 
    Optimizing $\mathbf{\Pi}$ while holding $\mathbf{w}$ fixed is the optimal transport problem, which is clearly convex. 
    Now fix $\mathbf{\Pi}$. Note that
    \begin{align*}
        F(\mathbf{w}, \mathbf{\Pi}) =& \sum_{i=p-1}^{n_{1}}\sum_{j=n_{1}+1}^{n-1} \pi_{ij} \sum_{h=0}^{p-1} \ell(w_{i-h} - w_{j-h}) + \frac{\lambda}{2}\Vert \mathbf{w} \Vert^{2} \\
        =& \sum_{h=0}^{p-1} \underbrace{\sum_{i=p-1}^{n_{1}}\sum_{j=n_{1}+1}^{n-1} \pi_{ij} \ell(w_{i-h} - w_{j-h})}_{F_{h}(\mathbf{w}, \mathbf{\Pi})} + \frac{\lambda}{2}\Vert \mathbf{w} \Vert^{2}. 
    \end{align*}
    It suffices to show each $F_{h}$ is convex in $\mathbf{w}$. 
    In the following, we only prove the case of $F_{0}$ since the arguments for the other cases are similar. 
    Let $\mathbf{u} = (w_{p-1}, w_{1}, \ldots, w_{n_{1}})^{\top}$ and $\mathbf{v} = (w_{n_{1}+1}, \ldots w_{n-1})^{\top}$. Then
    \begin{align*}
        \frac{\partial F_{0}}{\partial \mathbf{u}} &= \left( \sum_{j=n_{1}+1}^{n-1}\pi_{ij}\ell'(w_{i} - w_{j}): i = p-1, \ldots,n_{1} \right)^{\top},  \\
        \frac{\partial F_{0}}{\partial \mathbf{v}} &= \left( -\sum_{i=p-1}^{n_{1}}\pi_{ij}\ell'(w_{i} - w_{j}): j = n_{1}+1, \ldots,n-1 \right)^{\top}.
    \end{align*}
    Moreover,
    \begin{align}
        \frac{\partial^{2} F_{0}}{\partial \mathbf{u}^{2}} =& \mathrm{diag}\left( \sum_{j=n_{1}+1}^{n-1}\pi_{ij}\ell''(w_{i} - w_{j}): i = p-1,\ldots,n_{1} \right), \label{2der1} \\
        \frac{\partial^{2} F_{0}}{\partial \mathbf{v}^{2}} =& \mathrm{diag}\left( \sum_{i=p-1}^{n_{1}} \pi_{ij}\ell''(w_{i} - w_{j}): j = n_{1}+1, \ldots, n-1 \right), \label{2der2} \\
        \frac{\partial^{2} F_{0}}{ \partial \mathbf{u} \partial \mathbf{v}} =& -\left( \pi_{ij}\ell''(w_{i}-w_{j}) \right)_{p-1 \leq i \leq n_{1}, n_{1}+1 \leq j \leq n-1}. \label{2der3}
    \end{align}
    Since $\ell$ is convex, $\ell'' \geq 0$. From \eqref{2der1}--\eqref{2der3}, it is readily seen that the Hessian matrix $\mathbf{H} = \partial^{2} F_{0} / \partial \mathbf{w}^{2}$ is weakly diagonally dominant with nonnegative diagonal entries.
    It follows from the Gershgorin circle theorem that $\mathbf{H}$ is positive semi-definite. 
    Thus $F_{0}$ is convex in $\mathbf{w}$.
\end{proof}



In order to prove Proposition \textcolor{red}{4}, we need the following lemma.

\begin{lemma} \label{lem1}
    Suppose $\{\tilde{x}_{t}\}$ satisfies (A1) and (A2). Then
    \begin{align*}
        \mathcal{W}_{k} \left( \frac{1}{n-p+1}\sum_{t=p-1}^{n-1} \delta_{\mathbf{v}_{t}(\tilde{\mathbf{x}})}, \mu_{p} \right) \xrightarrow{n \rightarrow \infty} 0 \quad \mathrm{a.s.},
    \end{align*}
    where $\tilde{\mathbf{x}} = (\tilde{x}_{0}, \ldots, \tilde{x}_{n-1})$.
\end{lemma}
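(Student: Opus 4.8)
The plan is to deduce the lemma from two classical ingredients: Birkhoff's pointwise ergodic theorem (applicable by (A1)) and the fact that, on the space of probability measures on $\mathbb{R}^{p}$ with finite $k$-th moment, weak convergence together with convergence of the $k$-th moments implies convergence in $\mathcal{W}_{k}$ --- the converse of the implication recalled in Section~\ref{Sec::Prelim}; see \citet[Theorem~6.9]{villani2021}. Write $\hat{\mu}_{n}:=\frac{1}{n-p+1}\sum_{t=p-1}^{n-1}\delta_{\mathbf{v}_{t}(\tilde{\mathbf{x}})}$ for the empirical measure appearing in the statement. As a preliminary observation, since $\Vert\mathbf{v}_{t}(\tilde{\mathbf{x}})\Vert^{k}\le p^{k/2}\sum_{h=0}^{p-1}|\tilde{x}_{t-h}|^{k}$, assumption (A2) gives $\mathbb{E}\Vert\mathbf{v}_{p-1}(\tilde{\mathbf{x}})\Vert^{k}<\infty$; in particular $\mu_{p}$ has finite $k$-th moment, so the above characterization is available.

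First I would establish the moment convergence. Realizing $\{\tilde{x}_{t}\}$ on its canonical space equipped with the (ergodic) shift, the function $\omega\mapsto\Vert\mathbf{v}_{p-1}(\tilde{\mathbf{x}})\Vert^{k}$ is integrable by the bound above, so Birkhoff's theorem gives $\int\Vert\mathbf{x}\Vert^{k}\,d\hat{\mu}_{n}=\frac{1}{n-p+1}\sum_{t=p-1}^{n-1}\Vert\mathbf{v}_{t}(\tilde{\mathbf{x}})\Vert^{k}\to\mathbb{E}\Vert\mathbf{v}_{p-1}(\tilde{\mathbf{x}})\Vert^{k}=\int\Vert\mathbf{x}\Vert^{k}\,d\mu_{p}$ almost surely (shifting the summation range by the fixed amount $p-1$ is immaterial). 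Next I would establish weak convergence almost surely. Fix a countable family $\{f_{m}\}_{m\ge1}\subset C_{b}(\mathbb{R}^{p})$ that is convergence-determining, which exists because $\mathbb{R}^{p}$ is Polish \citep{billingsley1995}. For each fixed $m$, applying Birkhoff's theorem to the bounded (hence integrable) function $\omega\mapsto f_{m}(\mathbf{v}_{p-1}(\tilde{\mathbf{x}}))$ produces a probability-one event $A_{m}$ on which $\int f_{m}\,d\hat{\mu}_{n}\to\int f_{m}\,d\mu_{p}$. On $\bigcap_{m\ge1}A_{m}$, still of probability one, this holds simultaneously for all $m$, whence $\hat{\mu}_{n}\Rightarrow\mu_{p}$ there.

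Finally, intersecting the probability-one event from the moment step with $\bigcap_{m\ge1}A_{m}$ and invoking the characterization of $\mathcal{W}_{k}$-convergence yields $\mathcal{W}_{k}(\hat{\mu}_{n},\mu_{p})\to0$ almost surely, which is the claim.

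The only step requiring a little care is the passage from ``for each fixed test function, almost-sure convergence of integrals'' to ``almost surely, $\hat{\mu}_{n}\Rightarrow\mu_{p}$'': this is exactly why one must restrict to a single countable convergence-determining class, so that one null set suffices for all test functions at once. Everything else is a direct application of the ergodic theorem together with the elementary moment bound, so I do not anticipate a genuine obstacle; the lemma is in effect an ergodic-theoretic analogue of the Glivenko--Cantelli theorem phrased in the Wasserstein metric.
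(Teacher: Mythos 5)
Your proposal is correct and follows essentially the same route as the paper: both reduce $\mathcal{W}_{k}$-convergence to almost-sure weak convergence plus almost-sure convergence of $k$-th moments (the Villani characterization), and obtain both via the pointwise ergodic theorem applied to the stationary ergodic sequence $\{\mathbf{v}_{t}(\tilde{\mathbf{x}})\}$. The only (immaterial) difference is in securing a single null set for weak convergence: you invoke a countable convergence-determining class of bounded continuous functions, whereas the paper uses characteristic functions at rational frequencies $\mathbf{q}\in\mathbb{Q}^{p}$ and extends to all of $\mathbb{R}^{p}$ by a uniform-continuity/Lipschitz argument.
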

\begin{proof}
    By Theorem 7.12 of \cite{villani2021}, it suffices to show that with probability one,
    \begin{align} \label{lem1-cond1}
        \frac{1}{n-p+1}\sum_{t=p-1}^{n-1} \delta_{\mathbf{v}_{t}(\tilde{\mathbf{x}})} \Rightarrow \mu_{p},
    \end{align}
    and 
    \begin{align} \label{lem1-cond2}
        \frac{1}{n-p+1} \sum_{t=p-1}^{n-1} \Vert \mathbf{v}_{t}(\tilde{\mathbf{x}}) \Vert^{k} \rightarrow \mathbb{E}\Vert \mathbf{v}_{p-1}(\tilde{\mathbf{x}}) \Vert^{k}.
    \end{align} 
    By Theorem 36.4 of \cite{billingsley1995}, the processes $\{\Vert \mathbf{v}_{t}(\tilde{\mathbf{x}}) \Vert^{k}\}_{t}$ and $\{\exp(\mathrm{i}\mathbf{h}^{\top}\mathbf{v}_{t}(\tilde{\mathbf{x}}))\}_{t}$ are stationary and ergodic for any $\mathbf{h} \in \mathbb{R}^{p}$, where $\mathrm{i}=\sqrt{-1}$.
    Thus \eqref{lem1-cond2} follows immediately from the Ergodic theorem. 
    To show \eqref{lem1-cond1}, it suffices to show, with probability one,
    \begin{align*}
        \hat{\varphi}_{n}(\mathbf{h}) := \frac{1}{n-p+1} \sum_{t=p-1}^{n-1} \exp(\mathrm{i}\mathbf{h}^{\top}\mathbf{v}_{t}(\tilde{\mathbf{x}})) \rightarrow \int \exp(\mathrm{i}\mathbf{h}^{\top}\mathbf{v}) d\mu_{p}(\mathbf{v}) =: \varphi(\mathbf{h}) \quad \mbox{for all } \mathbf{h} \in \mathbb{R}^{p}.
    \end{align*}
    By the Ergodic theorem, we have
    \begin{align} \label{lem1-23oct4}
        \hat{\varphi}_{n}(\mathbf{q}) \rightarrow \varphi(\mathbf{q}), \quad \mbox{for all } \mathbf{q} \in \mathbb{Q}^{p}, 
    \end{align}
    with probability one. Let 
    \begin{align*}
        \mathcal{E} = \{\hat{\varphi}_{n}(\mathbf{q}) \rightarrow \varphi(\mathbf{q}) \mbox{ for all } \mathbf{q} \in \mathbb{Q}^{p}\} 
        \cap \left\{ \frac{1}{n-p+1} \sum_{t=p-1}^{n-1}\Vert \mathbf{v}_{t}(\tilde{\mathbf{x}}) \Vert \rightarrow \mathbb{E}\Vert \mathbf{v}_{p-1}(\tilde{\mathbf{x}}) \Vert\right\},
    \end{align*}
    noting that $\mathbb{E} \Vert \mathbf{v}_{p-1}(\tilde{\mathbf{x}}) \Vert < \infty$ by (A2).
    By the Ergodic theorem and \eqref{lem1-23oct4}, $\mathbb{P}(\mathcal{E}) = 1$.
    Fix $\omega \in \mathcal{E}$ and let $\epsilon>0$ and $\mathbf{h} \in \mathbb{R}^{p}$ be arbitrary.
    Since $\varphi$ is uniformly continuous, there exists $\delta>0$ such that 
    \begin{align} \label{lem1-oct1024-1}
        |\varphi(\mathbf{h}) - \varphi(\mathbf{q})| < \epsilon,
    \end{align}
    for all $\mathbf{q} \in \mathbb{Q}^{p}$ with $\Vert \mathbf{h} - \mathbf{q} \Vert < \delta$. 
    In addition, using $|\exp(\mathrm{i}\mathbf{h}^{\top}\mathbf{z}) - 1| \leq \Vert \mathbf{h} \Vert \Vert \mathbf{z} \Vert$,
    we have
    \begin{align} \label{lem1-oct1024-2}
        |\hat{\varphi}_{n}(\mathbf{h}) - \hat{\varphi}_{n}(\mathbf{q})| \leq \frac{\delta}{n-p+1} \sum_{t=p-1}^{n-1} \Vert \mathbf{v}_{t}(\tilde{\mathbf{x}}) \Vert,
    \end{align}
    for all $\mathbf{h}$, $\mathbf{q}$ such that $\Vert \mathbf{h} - \mathbf{q} \Vert < \delta$.
    Combining \eqref{lem1-oct1024-1}--\eqref{lem1-oct1024-2}, we have
    \begin{align*}
        |\hat{\varphi}_{n}(\mathbf{h}) - \varphi(\mathbf{h})| \leq& |\hat{\varphi}_{n}(\mathbf{h}) - \hat{\varphi}(\mathbf{q})| + |\hat{\varphi}_{n}(\mathbf{q}) - \varphi(\mathbf{q})| + |\varphi(\mathbf{q}) - \varphi(\mathbf{h})| \\
        \leq& \frac{\delta}{n-p+1} \sum_{t=p-1}^{n-1} \Vert \mathbf{v}_{t}(\tilde{\mathbf{x}}) \Vert + o(1) + \epsilon
    \end{align*}
    on $\omega \in \mathcal{E}$.
    Note, in particular, that $\epsilon$ and $\delta$ do not depend on $\mathbf{h}$. 
    Letting $\epsilon \rightarrow 0$ and $\delta \rightarrow 0$ proves \eqref{lem1-cond1} holds on $\mathcal{E}$, and hence with probability one.
\end{proof}

\begin{proof}[Proof of Proposition \textcolor{red}{4}] 
    We only consider the case $\mathcal{M}_{n} \subseteq \{n_{1} + 1, \ldots, n-1\}$ here as the other case follows from a similar argument.
    Define
    \begin{align*}
        \hat{\mu}_{\mathrm{pre}} &:= \frac{1}{n_{1}-p+2} \sum_{t=p-1}^{n_{1}} \delta_{\mathbf{v}_{t}(\hat{\mathbf{w}})} = \frac{1}{n_{1}-p+2} \sum_{t=p-1}^{n_{1}} \delta_{\mathbf{v}_{t}(\tilde{\mathbf{x}})} =: \mu_{\mathrm{pre}},\\
        \hat{\mu}_{\mathrm{post}} &:= \frac{1}{n - n_{1} - 1} \sum_{t=n_{1}+1}^{n-1} \delta_{\mathbf{v}_{t}(\hat{\mathbf{w}})} \\
        \mu_{\mathrm{post}} &:= \frac{1}{n - n_{1} - 1} \sum_{t=n_{1}+1}^{n-1} \delta_{\mathbf{v}_{t}(\tilde{\mathbf{x}})}
    \end{align*}
    Since the Wasserstein distance of order $k$ is a metric on the space of probability measures with finite $k$-th moments, we have
    \begin{align*}
        \mathcal{W}_{k}(\hat{\mu}_{\mathrm{post}}, \mu_{p}) \leq& \mathcal{W}_{k}(\hat{\mu}_{\mathrm{post}}, \hat{\mu}_{\mathrm{pre}}) + \mathcal{W}_{k}(\mu_{\mathrm{pre}}, \mu_{p}) \\
        \leq& \mathcal{W}_{k}({\mu}_{\mathrm{post}}, {\mu}_{\mathrm{pre}}) + \mathcal{W}_{k}(\mu_{\mathrm{pre}}, \mu_{p}) \\
        \leq& \mathcal{W}_{k}(\mu_{\mathrm{post}}, \mu_{p})+ 2\mathcal{W}_{k}(\mu_{\mathrm{pre}}, \mu_{p}).
    \end{align*}
    Now the desired result follows from Lemma \ref{lem1}.
\end{proof}

In the rest of this section, the derivation of (\textcolor{red}{13}) in Section \textcolor{red}{4.2} is provided.

\begin{proof}[Proof of (\textcolor{red}{13}) in Section \textcolor{red}{4.2}]
    From (\textcolor{red}{9})--(\textcolor{red}{11}) in the main text, we can compute the marginal distributions implied by the imputed series, given as follows. 
If $t < n_{1}$,
\begin{align*}    
    \mathbb{P}((\hat{w}_{t}, \hat{w}_{t-1}) = (1,1)) =& (1-\frac{1}{k_{1}})\lambda_{2}(1-q) + \frac{1}{k_{1}} \lambda_{2} a_{1}, \\
    \mathbb{P}((\hat{w}_{t}, \hat{w}_{t-1}) = (1,0)) =& (1 - \frac{1}{k_{1}}) \lambda_{1} p + \frac{1}{k_{1}} \lambda_{1} b_{1}, \\
    \mathbb{P}((\hat{w}_{t}, \hat{w}_{t-1}) = (0,1)) =& (1 - \frac{1}{k_{1}}) \lambda_{2} q + \frac{1}{k_{1}} \lambda_{2} (1 - a_{1}), \\
    \mathbb{P}((\hat{w}_{t}, \hat{w}_{t-1}) = (0,0)) =& (1 - \frac{1}{k_{1}}) \lambda_{1} (1 - p) + \frac{1}{k_{1}} \lambda_{1} (1 - b_{1}),
\end{align*}
and for $t \geq n_{1}$, the marginal distribution can be obtained by replacing $a_{1}, b_{1}, k_{1}$ by $a_{2}, b_{2}, k_{2}$ in the above. 
Since TWI equates the marginal distributions before and after $n_{1}$, we have 
\begin{align*}
    (1-\frac{1}{k_{1}})\lambda_{2}(1-q) + \frac{1}{k_{1}} \lambda_{2} a_{1} =& (1-\frac{1}{k_{2}})\lambda_{2}(1-q) + \frac{1}{k_{2}} \lambda_{2} a_{2}, \\
    (1 - \frac{1}{k_{1}}) \lambda_{1} p + \frac{1}{k_{1}} \lambda_{1} b_{1} =& (1 - \frac{1}{k_{2}}) \lambda_{1} p + \frac{1}{k_{2}} \lambda_{1} b_{2}, \\
    (1 - \frac{1}{k_{1}}) \lambda_{2} q + \frac{1}{k_{1}} \lambda_{2} (1 - a_{1}) =& (1 - \frac{1}{k_{2}}) \lambda_{2} q + \frac{1}{k_{2}} \lambda_{2} (1 - a_{2}), \\
    (1 - \frac{1}{k_{1}}) \lambda_{1} (1 - p) + \frac{1}{k_{1}} \lambda_{1} (1 - b_{1}) =& (1 - \frac{1}{k_{2}}) \lambda_{1} (1 - p) + \frac{1}{k_{2}} \lambda_{1} (1 - b_{2}).
\end{align*}
After some algebraic manipulations, these equations simplify to (\textcolor{red}{13}) in the main text.
\end{proof}

\section{Scalar filter algorithm} \label{App::SF}
The following summarizes the scalar filter algorithm of \citet{pena2021}.
\begin{itemize}
    \item Initialize the imputation $\hat{\mathbf{w}}^{(0)} = (\hat{w}_{0}^{(0)}, \hat{w}_{1}^{(0)}, \ldots, \hat{w}_{n-1}^{(0)})^{\top}$ with $\hat{w}_{t}^{(0)} = 0$ if $t \in \mathcal{M}$.
    \item For each $s \in \mathcal{M}$, let $I^{(s)}_{t}$ be the indicator function. That is,
    \begin{align*}
        I_{t}^{(s)} = \left\{ \begin{array}{cc}
           1, & t = s \\
           0,  & t \neq s
        \end{array} \right.
    \end{align*}
    and estimate the intervention model using $\hat{\mathbf{w}}^{(0)}$. For example, one can fit an ARX model with $\{I_{t}^{(s)}: s\in \mathcal{M}\}$ as exogenous predictors.
    \item Let $\hat{\beta}^{(s)}$ be the intervention effect associated with $I_{t}^{(s)}$. Estimate the missing data by
    \begin{align*}
        \hat{w}_{s} = \hat{w}_{s}^{(0)} - \hat{\beta}^{(s)}, \quad s\in \mathcal{M}.
    \end{align*}
\end{itemize}
As one can see from the algorithm above, the scalar filter algorithm proposed by \citet{pena2021} is based on intervention analysis.
It can be iterated in order to improve the estimates.
However, no convergence guarantee is provided by the authors.

When fitting an ARX model with the indicators as exogenous predictors, if there are many missing values, standard approaches may be infeasible or unreliable due to near collinearity. 
In this case, we employ a small $\ell_{2}$ penalty to avoid numerical instability. 

\section{Complementary numerical results} \label{app::ACFs}

\subsection{Autocovariance estimation}

In this subsection, we compare the ACFs estimated from the imputed series.
Define $\gamma(\mathbf{z}, h) = (n-h)^{-1} \sum_{t=h}^{n-1}(z_{t} - \bar{z})(z_{t-h} - \bar{z})$, where $\mathbf{z} = (z_{0}, z_{1}, \ldots, z_{n-1})$ and $\bar{z} = n^{-1}\sum_{t=0}^{n-1}z_{t}$.
Then the RMSE is defined as
\begin{align} \label{Sec5_acf_def}
    \mathrm{RMSE}_{\gamma}(h) = \sqrt{\frac{1}{1000} \sum_{i=1}^{1000} (\gamma(\hat{\mathbf{w}}^{(i)}, h) - \gamma_{\mathbf{x}}(h))^{2}},
\end{align}
where $\hat{\mathbf{w}}^{(i)} = (\hat{w}_{0}^{(i)}, \ldots, \hat{w}_{n-1}^{(i)})$ and $\gamma_{\mathbf{x}}(h) = \frac{1}{1000} \sum_{i=1}^{1000} \gamma(\mathbf{x}^{(i)}, h)$ with $\mathbf{x}^{(i)} = (\tilde{x}_{0}^{(i)}, \ldots, \tilde{x}_{n-1}^{(i)})$.

Table \ref{tab:acf_RMSE} reports $\mathrm{RMSE}_{\gamma}(h)$ for $h = 0, 1,2$. 
For Model \textcolor{red}{1} 
(AR), most methods perform reasonably well. 
Linear interpolation produces accurate ACF estimates since the positive AR coefficient aligns well with linearity, while the scalar filter exhibits largest estimation errors.
For Model \textcolor{red}{2} 
(ARMA), TWI improves from linear and Kalman initializations and yields more accurate ACF estimates: TWI reduces the biases in Kalman smoothing when $h = 0$ and in linear interpolation when $h = 1$, and $h = 2$.
For the nonlinear Model \textcolor{red}{3} 
(TAR), the ACFs estimated from the TWI-imputed series have RMSEs that are among the lowest, with substantial improvements over the initializations.
Notably, the scalar filter also yields accurate ACFs for $h = 1, 2$. 
Figure \ref{fig:TAR_acfhist} plots the histogram of the estimated ACFs at lag 1 (that is, $\gamma(\hat{\mathbf{w}}^{(i)}, 1)$), which shows how TWI effectively nullifies the biases caused by the initializations. 

For DGP \textcolor{red}{4} (I(1)) and DGP \textcolor{red}{5} (CYC), TWI again offers great improvement over initializations.
Noe that for DGP \textcolor{red}{5} (CYC), which generates a structural time series with deterministic cycles, iSSA is very effective at least under missing pattern I.
However, perhaps due to the small window length caused by the high missing ratio, the performance of iSSA under missing pattern II is severely compromised (see also Figure \ref{fig:Cyc_scatter}, which shows TWI captures the underlying dynamic well).
The scalar filter performs poorly for these models, which is likely the consequence of using many indicators in the intervention analysis.
Among the benchmarks, Kalman smoothing produces the best result for these two models, which can benefit from further improvement provided by the proposed TWI procedure.


\spacingset{1}
\begin{figure}
    \centering
    \begin{subfigure}[t]{0.48\textwidth}
        \centering
        \includegraphics[width=\linewidth, height = 0.35\textheight]{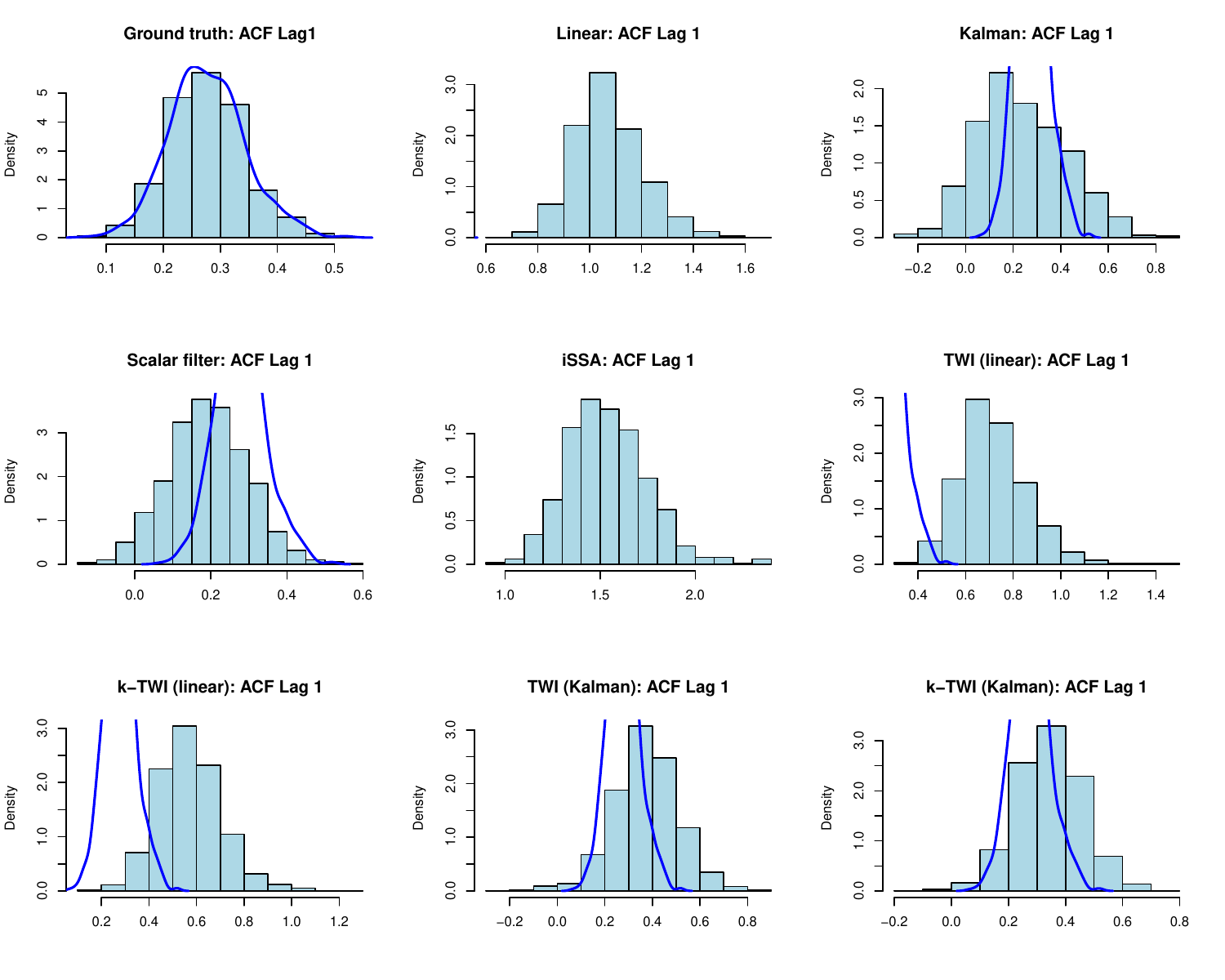}
        \caption{Missing pattern I}
    \end{subfigure} 
    \begin{subfigure}[t]{0.48\textwidth}
        \centering
        \includegraphics[width=\linewidth, height = 0.35\textheight]{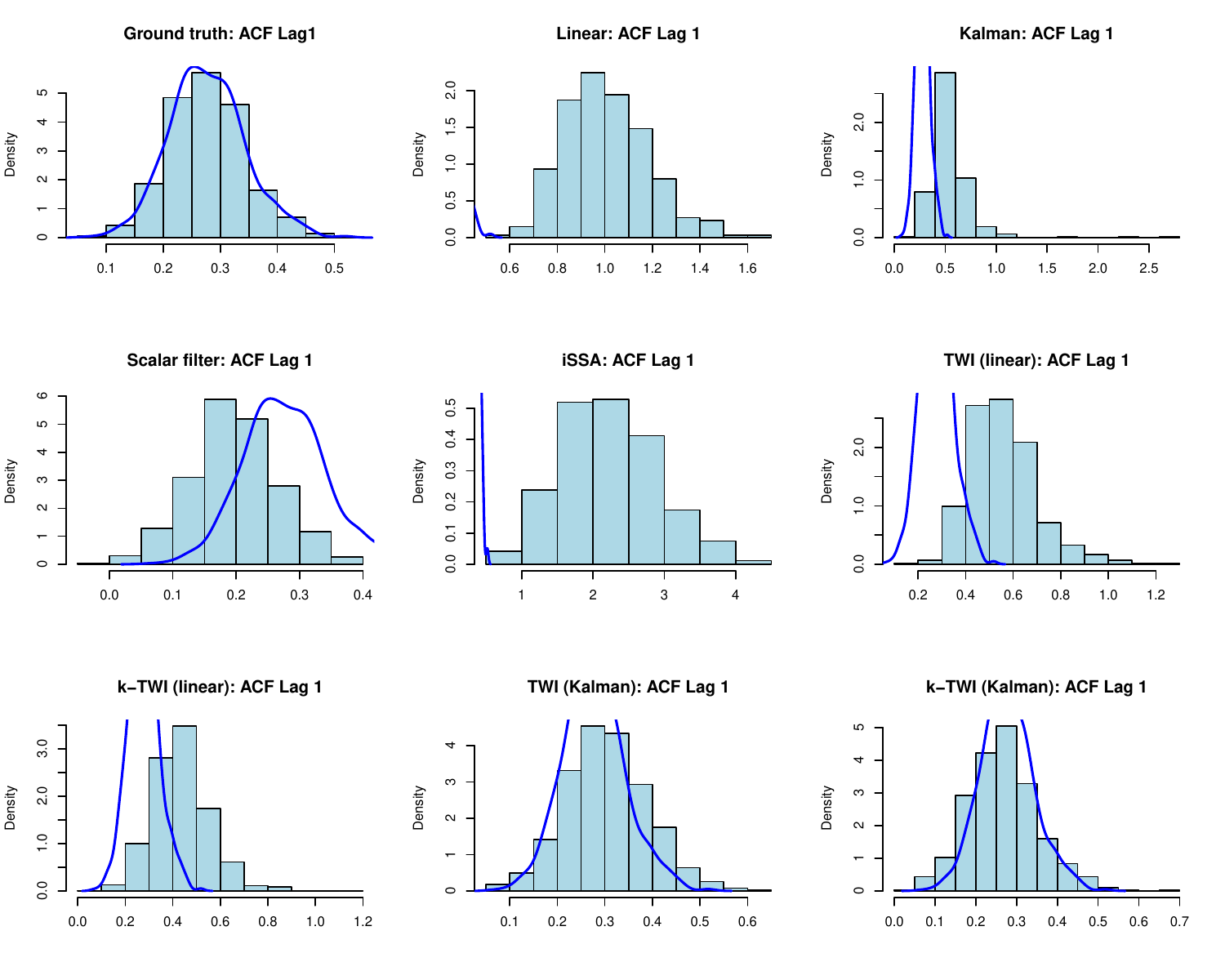}
        \caption{Missing pattern II}
    \end{subfigure}
    \caption{Histogram of estimated autocovariance function at lag 1 when the data are generated from Model \textcolor{red}{3} 
    (TAR). The blue line is the smoothed density of the histogram of the ACFs estimated from the ground truth.}
    \label{fig:TAR_acfhist}
\end{figure}
\spacingset{1.75}

\spacingset{1}
\begin{figure}
    \centering
    \begin{subfigure}[t]{0.49\textwidth}
        \includegraphics[width=\linewidth]{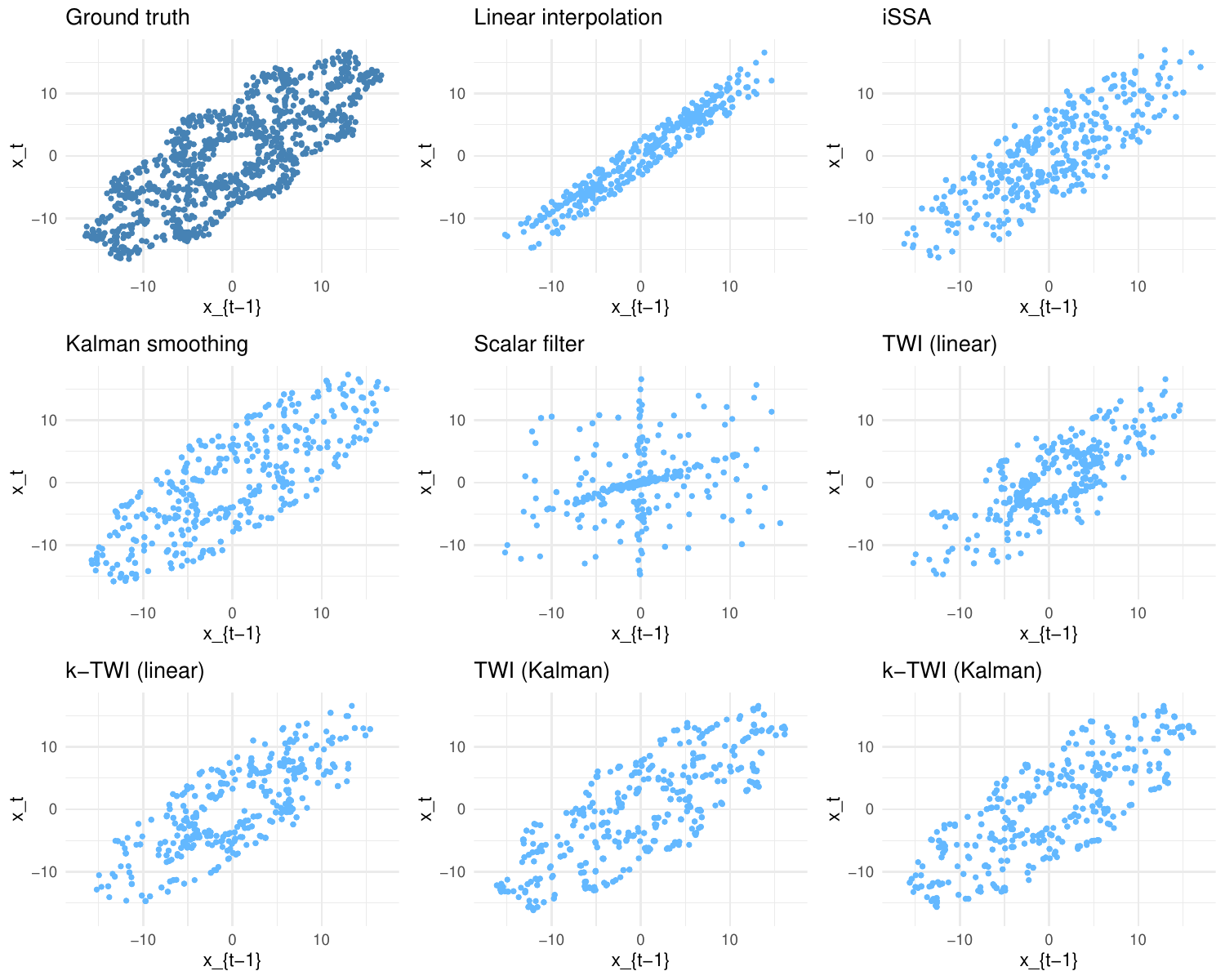}
        \caption{$h = 1$}
    \end{subfigure}
    \begin{subfigure}[t]{0.49\textwidth}
        \includegraphics[width=\linewidth]{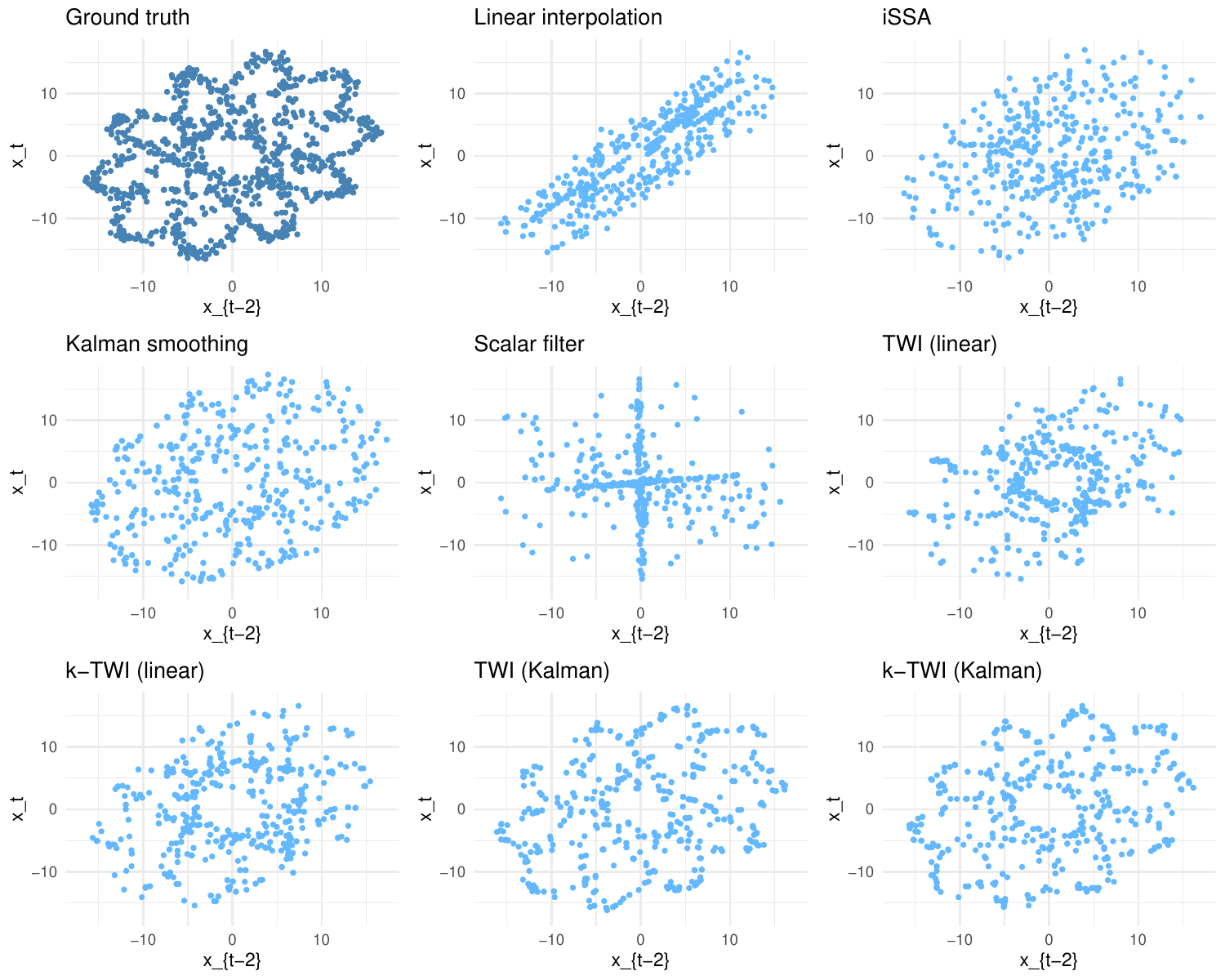}
        \caption{$h = 2$}
    \end{subfigure} \\
    \begin{subfigure}[t]{0.49\textwidth}
        \includegraphics[width=\linewidth]{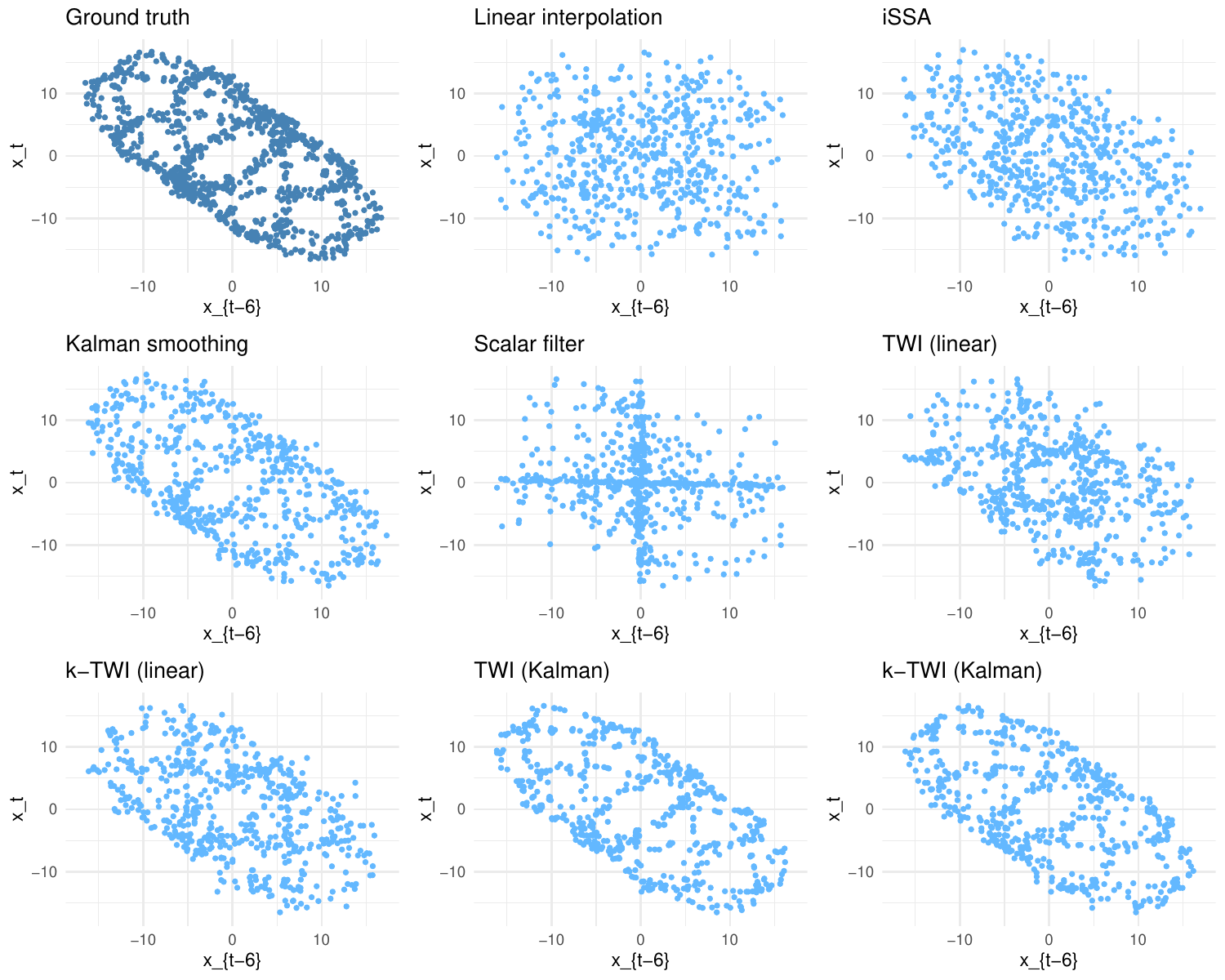}
        \caption{$h = 6$}
    \end{subfigure}
    \begin{subfigure}[t]{0.49\textwidth}
        \includegraphics[width=\linewidth]{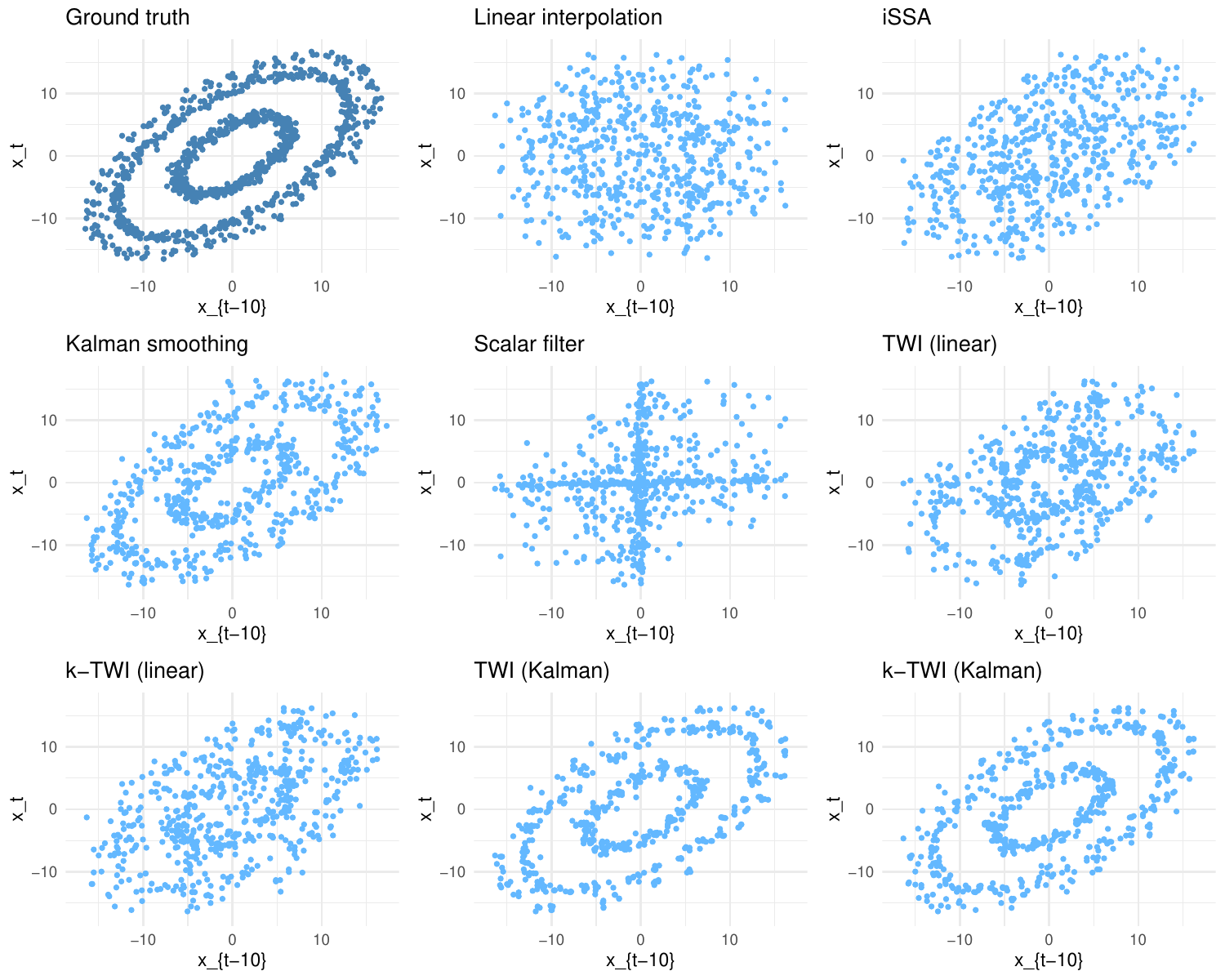}
        \caption{$h = 10$}
    \end{subfigure}
    \caption{Scatter plots of $(\tilde{x}_{t-h}, \tilde{x}_{t})$ of the original data and $(\hat{w}_{t-h}, \hat{w}_{t})$ of the imputed series, when the data are generated from Model \textcolor{red}{5} (CYC) under missing pattern II. $h \in \{1,2,6,10\}$.}
    \label{fig:Cyc_scatter}
\end{figure}
\spacingset{1.75}

\spacingset{1}
\begin{table}[h!]
\centering
\caption{$\mathrm{RMSE}_{\gamma}(h)$, defined in \eqref{Sec5_acf_def}, for $h = 0, 1, 2$, with $h$ denoting lag. The results are averaged over 1000 simulations, and the smallest values in each row are marked in boldface. See Table \textcolor{red}{1} 
in the main text for notations.}
\begin{tabular}{@{}ccrrrrrrrr@{}}
\toprule
Model& $h$ & Linear & iSSA & Kalman & ScalarF & TWI$_\mathrm{lin}$ & $k$-TWI$_\mathrm{lin}$ & TWI$_\mathrm{Kal}$ & $k$-TWI$_\mathrm{Kal}$        \\ \midrule
\multicolumn{10}{c}{\bf Missing pattern I} \\
AR    & 0 &  0.32 & {\bf 0.28} & 0.37 & 0.50 & 0.30 & 0.31 & 0.32 & 0.32 \\
      & 1 &  {\bf 0.26} & 0.31 & 0.29 & 0.47 & {\bf 0.26} & 0.30 & 0.28 & 0.31 \\
      & 2 &  0.26 & 0.28 & 0.29 & 0.38 & {\bf 0.25} & 0.28 & 0.27 & 0.29\smallskip \\
ARMA  & 0 &  0.14 & {\bf 0.09} & 0.29 & 0.31 & 0.13 & 0.13 & 0.22 & 0.18 \\
      & 1 &  0.22 & 0.34 & {\bf 0.07} & 0.08 & 0.14 & 0.10 & {\bf 0.07} & {\bf 0.07} \\
      & 2 &  0.08 & 0.11 & 0.07 & 0.07 & {\bf 0.06} & {\bf 0.06} & 0.07 & 0.07\smallskip \\
TAR   & 0 &  0.59 & {\bf 0.35} & 0.73 & 1.07 & 0.65 & 0.56 & 0.56 & 0.45 \\
      & 1 &  0.81 & 1.28 & 0.19 & 0.14 & 0.46 & 0.32 & 0.17 & {\bf 0.13} \\
      & 2 &  0.40 & 0.40 & 0.16 & 0.11 & 0.24 & 0.16 & 0.10 & {\bf 0.09}\smallskip \\
I(1)  & 0 &  0.52 & 0.48 & 0.48 & 19.41 & 0.42 & 0.35 & 0.40 & {\bf 0.34} \\
      & 1 &  0.35 & 0.40 & 0.25 & 9.32 & 0.28 & 0.25 & 0.22 & {\bf 0.21} \\
      & 2 &  0.17 & 0.13 & {\bf 0.11} & 1.28 & 0.15 & 0.14 & {\bf 0.11} & {\bf 0.11}\smallskip \\
CYC   & 0 &  10.94 & 0.64 & 3.90 & 4.67 & 1.81 & 1.25 & 0.68 & {\bf 0.62} \\
      & 1 &  9.05 & {\bf 0.43} & 3.33 & 5.88 & 1.47 & 1.04 & 0.54 & 0.51 \\
      & 2 &  1.90 & {\bf 0.15} & 0.81 & 2.13 & 0.41 & 0.33 & 0.17 & 0.18\smallskip   \\
\multicolumn{10}{c}{\bf Missing pattern II} \\
AR    & 0 &  0.38 & 0.34 & 0.55 & 0.78 & {\bf 0.33} & 0.36 & 0.43 & 0.39 \\
      & 1 &  {\bf 0.28} & 0.42 & 0.42 & 0.70 & 0.29 & 0.34 & 0.37 & 0.36  \\
      & 2 &  {\bf 0.26} & 0.39 & 0.36 & 0.65 & 0.27 & 0.31 & 0.34 & 0.34\smallskip \\
ARMA  & 0 &  0.14 & 0.44 & 0.31 & 0.32 & 0.13 & {\bf 0.11} & 0.24 & 0.17 \\
      & 1 &  0.17 & 0.65 & 0.08 & 0.09 & 0.10 & 0.08 & {\bf 0.07} & {\bf 0.07} \\
      & 2 &  0.15 & 0.46 & {\bf 0.07} & 0.08 & 0.09 & {\bf 0.07} & {\bf 0.07} & {\bf 0.07}\smallskip \\
TAR   & 0 &  {\bf 0.49} & 1.29 & 0.71 & 1.08 & 0.72 & 0.68 & 0.73 & 0.65 \\
      & 1 &  0.75 & 2.06 & 0.34 & 0.11 & 0.31 & 0.20 & {\bf 0.09} & {\bf 0.09} \\
      & 2 &  0.77 & 1.37 & 0.34 & {\bf 0.10} & 0.42 & 0.28 & 0.19 & 0.14\smallskip \\
I(1)  & 0 &  0.47 & 0.38 & 0.47 & 51.84 & 0.38 & 0.30 & 0.36 & {\bf 0.29} \\
      & 1 &  0.19 & 0.30 & 0.18 & 0.56 & 0.17 & 0.14 & 0.14 & {\bf 0.13} \\
      & 2 &  0.08 & {\bf 0.05} & 0.07 & 0.52 & 0.08 & 0.08 & 0.07 & 0.07\smallskip \\
CYC   & 0 &  7.46 & 7.04 & 1.96 & 15.21 & 12.27 & 6.49 & 1.26 & {\bf 1.03} \\
      & 1 &  2.63 & 4.82 & 1.45 & 16.23 & 9.57 & 5.07 & 0.98 & {\bf 0.83} \\
      & 2 &  11.18 & 0.70 & 0.79 & 7.90 & 1.28 & 1.04 & 0.28 & {\bf 0.31} \\
\bottomrule
\end{tabular}
\label{tab:acf_RMSE}
\end{table}
\spacingset{1.75}

\subsection{Compositional time series}
In this subsection, we discuss the simulation results associated with the following model.

\setcounter{model}{6}
\begin{model}[Additive logistic model; AL, \citealp{brunsdon1998}] \label{Sec5_modelAL}
\begin{align*}
    y_{t,1} =& 0.1 + 0.7 y_{t-1,1} - 0.5 y_{t-1,2} + 0.2 \epsilon_{t,1} \\
    y_{t,2} =& 0.1 - 0.7 y_{t-1,2} + 0.2 \epsilon_{t,2}
\end{align*}
and the observed data are
\begin{equation*}
    x_{t,j} = \left\{ \begin{aligned}
        &\frac{\exp(y_{t,j})}{1 + \exp(y_{t,1}) + \exp(y_{t,2})}, \quad j = 1, 2, \\
        &\frac{1}{1 + \exp(y_{t,1}) + \exp(y_{t,2})}, \quad j = 3.
    \end{aligned} \right. 
\end{equation*}
\end{model}
DGP \ref{Sec5_modelAL} generates the so-called compositional time series, which refers to a multivariate time series $\mathbf{x}_{t} = (x_{t,1}, \ldots, x_{t,d})^{\top}$, $d > 1$, such that for each $t$, $\sum_{j=1}^{d}x_{t,j} = 1$, and $x_{t,j} \geq 0$ for all $j$.
Such data are encountered in a wide array of fields such as the environmental sciences, economics, geology and political science.
Here, we adopt the additive logistic (AL) model of \citet{brunsdon1998} in DGP \ref{Sec5_modelAL}.
The AL model postulates that after some nonlinear transformation, the data follow a standard VARMA model.
Alternative models for composition time series include the recent works of \citet{ZHENG2017}, \citet{HARVEY2024}, and \citet{Zhu2024}.

Table \textcolor{red}{1} 
in the main paper records the Wasserstein distance between the 3-dimensional marginal distributions of the imputed and original series.
With the smallest Wasserstein loss, TWI methods again demonstrate their ability to adapt to nonlinear dependence structures.
It is worth noting that while the linear interpolation results in a compositional series, other methods, such as the Kalman smoothing, do not generally maintain this compositional property.
By restricting the admissible set, TWI ensures that the imputed series remain compositional. 
As shown in Figure \ref{fig:ALR_scatter}, linear interpolation falsely promotes negative correlation between $\hat{w}_{t,3}$ and $\hat{w}_{t-1,2}$ and Kalman smoothing often deviates from the plausible range of a compositional series.
TWI amends these peculiarities, leading to imputations that align more closely with the underlying data. 

\spacingset{1}
\begin{figure}
    \centering
    \begin{subfigure}[t]{0.48\textwidth}
    \centering
        \includegraphics[width=\linewidth, height=0.4\textheight]{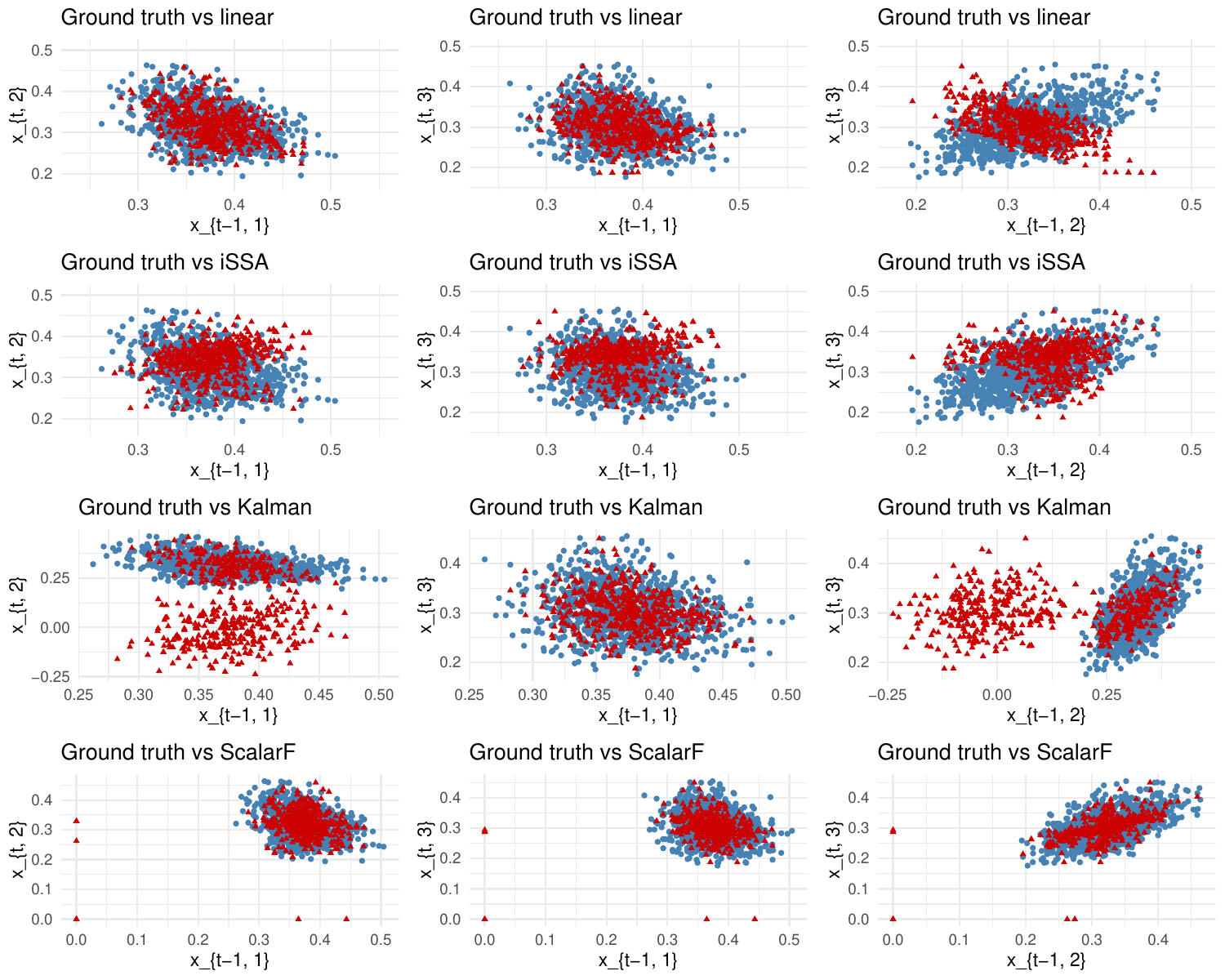}
    \end{subfigure} 
    \begin{subfigure}[t]{0.48\textwidth}
    \centering
        \includegraphics[width=\linewidth, height=0.4\textheight]{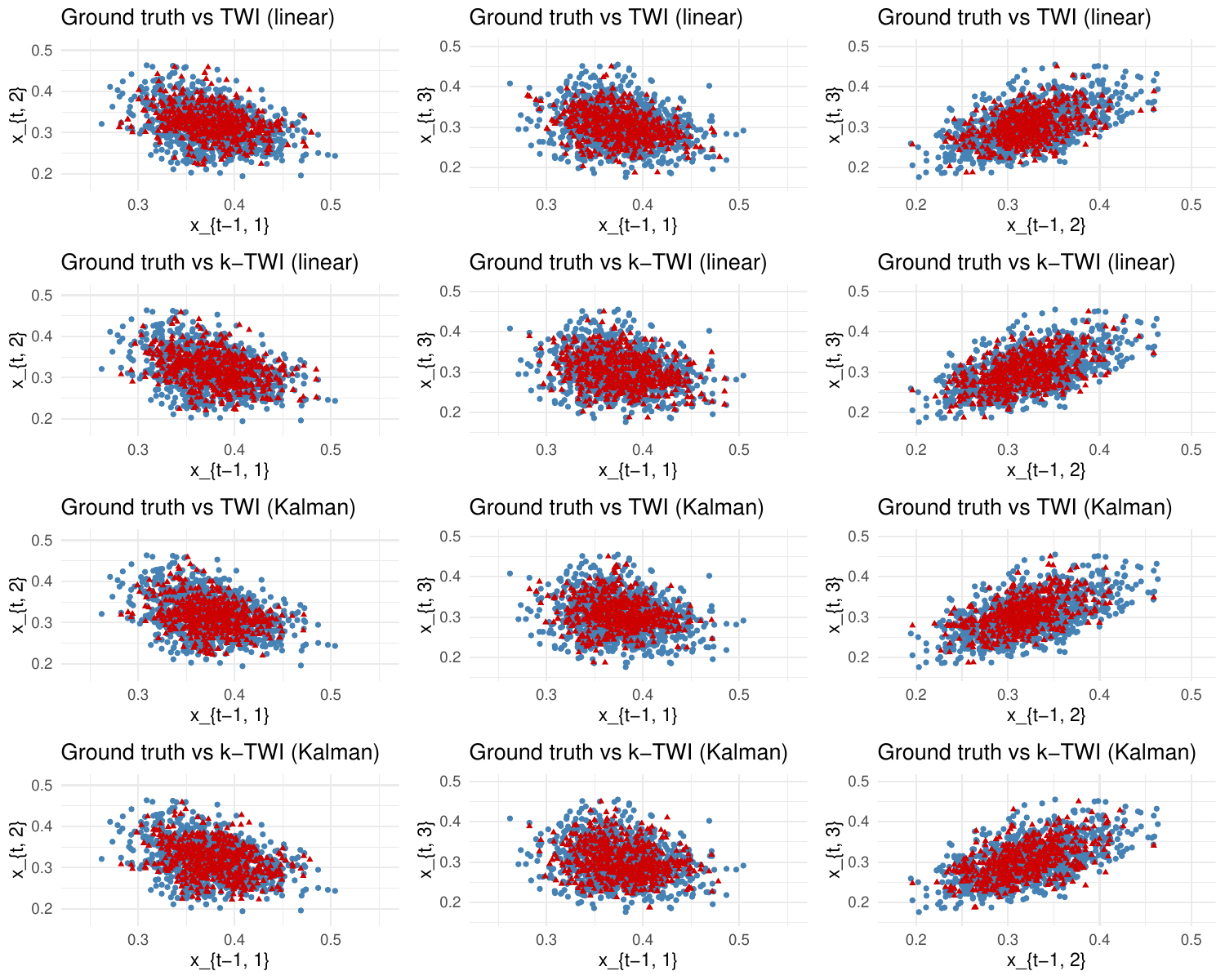}
    \end{subfigure} 
    \caption{Scatter plots of $(x_{t-1,i}, x_{t,j})$ of the original data (plotted in blue) and $(\hat{w}_{t-1,i}, \hat{w}_{t,j})$ of the imputed series (plotted in red) for $i,j = 1,2, 3$, $i \neq j$, when the data are generated from Model \ref{Sec5_modelAL} (AL) under missing pattern I.}
    \label{fig:ALR_scatter}
\end{figure}
\spacingset{1.75}

\subsection{Implementation details in groundwater data analysis}

The methods employed in the groundwater data analysis are detailed as follows.
First, linear interpolation, mean imputation, and Kalman smoothing are implemented using the \texttt{imputeTS} package \citep{imputets} in \texttt{R}.
For the iterative singular spectrum analysis (iSSA) method, we implement it via the \texttt{Rssa} package \citep{golyandina2018}.
There are three parameters to tune for the iSSA: the initialization, the window length, and the rank in the reconstruction step.
Mean imputation is adopted here as the initialization, which is a common choice in practice.
We set the window length to 6, which is roughly the maximum window length allowed when the missing ratio is high.
Finally, to determine the rank in the reconstruction step, we employ a cross-validation approach.
Specifically, we randomly mask 10\% of the observed data. Then, for each rank $r \in \{1,2,\ldots,6\}$, we apply the iSSA for imputation and select the rank $r^{*}$ that yields the most accurate imputations on the masked data, measured by the mean squared error.
Finally, the iSSA is implemented again with the full sample and the selected rank.

For the TWI-type methods, as discussed in the main text, we first apply kernel smoothing to estimate a trend function $f(t)$. We use the \texttt{locfit} package \citep{locfit} with the default tricube kernel and the bandwidth is set to 12 months. This step can be implemented regardless of the missing pattern.
Then the ($k$-)TWI is applied to the residuals $r_{t} = x_{t} - f(t)$, which has the same missing pattern as $\{x_{t}\}$, to obtain $\{\hat{r}_{t}\}$.
For the TWI, the cut-off point $n_{1}$ is set to $\lfloor 0.4 n \rfloor$ as in the simulation section.
For $k$-TWI, the cut-offs are $\{0.3n, 0.7n, 0.5n\}$.
The simple last observation carried forward (LOCF) is used as initialization for both TWI and $k$-TWI.
Finally, the imputation is $\hat{x}_{t} = \hat{r}_{t} + f(t)$.

\end{document}